\newcommand{\id}{\mathbbm{1}}
\newcommand{\ketscalar}[2]{\left\langle #1 \vert #2\right\rangle}
\newcommand{\R}{\mathbb{R}}
\newcommand{\C}{\mathbb{C}}
\newcommand{\Pp}{\mathbb{P}}
\newcommand{\E}{\mathbb{E}}
\newcommand{\Z}{\mathbb{Z}}
\newtheorem{theorem}{Theorem}
\newtheorem{lemma}{Lemma}
\newtheorem{proposition}{Proposition}
\theoremstyle{definition}
\newtheorem{remark}{Remark}
\newcommand{\avg}[1]{\langle #1\rangle}
\newtheorem*{theorem*}{Theorem}
\renewcommand{\H}{\mathcal{H}} 
\newcommand{\dg}{\dagger}
\newcommand{\vol}{\mathrm{vol}}
\newcommand{\vep}{\varepsilon}
\newcommand{\h}[1]{\boldsymbol{#1}}
\newcommand{\U}{\h{U}} %
\newcommand{\V}{\h{V}} %
\newcommand{\I}{\h{I}} 
\newcommand{\UU}{\mathcal{U}} 
\def\ketbra#1{|{#1}\rangle\!\langle{#1}|}
\def\Pr{\mathbb{P}}
\def\ep{\varepsilon}
\def\tr{{\rm tr}}
\newcommand{\trec}{t_{\mathrm{rec}}}
\renewcommand{\d}{\mathrm{D}}
\newcommand{\texit}{t_{\mathrm{exit}} }
\newcommand{\dtr}{D_{\mathrm{tr}}}
\newcommand{\dbures}{D_{\mathrm{B}}}
\newcommand{\npack}{{\cal N}_{\rm pack}}
\newcommand{\ncov}{{\cal N}_{\rm cov}}
\newcommand{\effsupp}{d_{\mathrm{supp}} }
\newcommand{\varsecond}{\Delta(H^{2})}
\newcommand{\varfourth}{\Delta(H^{4})}
\begin{document}

\preprint{APS/123-QED}

\title{Tight bounds on recurrence time in closed quantum systems}

\author{Marcin Kotowski}
\email{\!mkotowski@cft.edu.pl}
\author{Michał Oszmaniec}
\email{\!oszmaniec@cft.edu.pl}
\affiliation{Center for Quantum Enabled-Computing, Center for Theoretical Physics of the
Polish Academy of Sciences, Al. Lotnik´ow 32/46, 02-668 Warsaw, Poland}
\date{}                     


\date{\today}

\begin{abstract}
The evolution of an isolated quantum system inevitably exhibits recurrence: the state returns to the vicinity of its initial condition after finite time. Despite its fundamental nature, a rigorous quantitative understanding of recurrence has been lacking. We establish upper bounds on the recurrence time, $t_{\mathrm{rec}} \lesssim \texit(\epsilon)(1/\epsilon)^d$, where $d$ is the Hilbert-space dimension, $\epsilon$ the neighborhood size, and $\texit(\epsilon)$ the escape time from this neighborhood. For pure states evolving under a Hamiltonian $H$, estimating $\texit$ is equivalent to an inverse quantum speed limit problem: finding upper bounds on the time a time-evolved state $\psi_t$ needs to depart from the $\epsilon$-vicinity of the initial state $\psi_0$. We provide a partial solution, showing that under mild assumptions $\texit(\epsilon) \approx \epsilon /\sqrt{ \Delta(H^2)}$, with $\Delta(H^2)$ the Hamiltonian variance in $\psi_0$. We show that our upper bound on $t_{\mathrm{rec}}$ is generically saturated for random Hamiltonians. Finally, we analyze the impact of coherence of the initial state in the eigenbasis of $H$ on recurrence behavior.



\end{abstract}

\maketitle


\section{Introduction}

A state of a closed quantum system evolving under unitary evolution induced by a Hamiltonian operator $H$ in a finite-dimensional Hilbert space will eventually return arbitrarily close to its initial configuration after a sufficiently long time. This follows from the Poincaré recurrence theorem~\cite{Poincare1890,CornfeldFominSinai1982}, which applies to a broad class of dynamical systems, including closed quantum systems~\cite{ReccTheorem1950, schulman}. In generic cases, the recurrence (or Poincaré) time is expected to scale exponentially with the Hilbert-space dimension $d$, $t_{\mathrm{rec}} \sim e^{d}$. Previous works have provided physical arguments for this scaling (cf.~\cite{peres}) and estimated recurrence times under additional assumptions on the initial states or Hamiltonians~\cite{AverageRec,alvaro-recurrence,experimental1,Kac1943}. However, these approaches either (i) lack mathematical rigor, as they do not prove specific scaling relations for $t_{\mathrm{rec}}$; (ii) apply  to restricted classes of states or random Hamiltonians; or (iii) address recurrence statistics averaged over long times, rather than the \emph{first} recurrence event---which is of primary physical significance given the astronomical time scales involved. A complementary, model-specific line of work concerns revivals and collapse--revival phenomena in concrete systems, such as the Jaynes--Cummings model and interacting bosonic systems including confined Bose gases and two-site Bose--Hubbard dynamics~\cite{JaynesCummings1963,Pitaevskii1997,VekslerFishman2015}. These effects may be viewed as particularly regular and physically accessible manifestations of quantum recurrence, but their analysis relies on special features of the underlying models and therefore does not by itself provide general rigorous bounds on the first recurrence time.

This lack of rigorous quantitative understanding of recurrence time is unsatisfactory due to ubiquitous nature of recurrence (it was experimentally demonstrated in a suitably crafted many-body setting in \cite{expRECC}) and the role it plays in various areas, including the foundations of quantum statistical mechanics \cite{StatMech2016}, the phenomenology of circuit complexity in chaotic many-body dynamics \cite{BrownSuskind2017,complexityREVIEW,OKHH2024}  and its (conjectured) relation to quantum description of black holes and wormholes within AdS-CFT correspondence \cite{BHexp2020}. The purpose of this work is to clarify and advance precise understanding of recurrence phenomena in quantum mechanics.

\begin{figure}
    \centering
\includegraphics[scale=0.17]{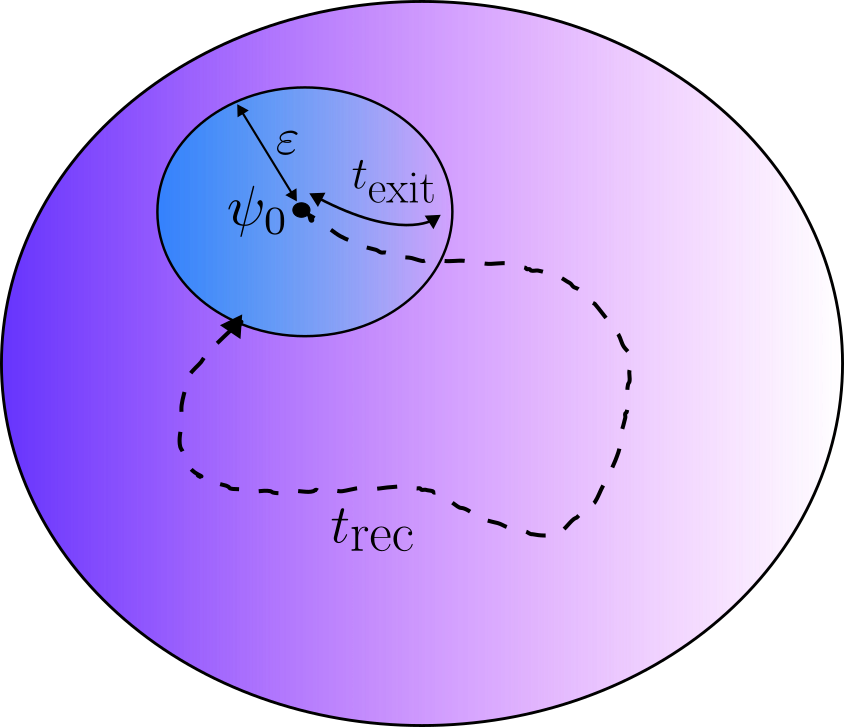}
    \caption{The Hamiltonian evolution of a closed quantum system started at an initial state $\psi_0$. After time $\texit$ the system leaves the $\vep$-ball around the initial point and subsequently after time $\trec > \texit$ the system has undergone recurrence.  }
    \label{fig:enter-label}
\end{figure}

 We first prove unconditional and deterministic upper bounds on the recurrence time, $t_{\text{rec}}\lesssim \texit(\ep) (\frac{1}{\ep})^d $, where $d$ is the dimension of the system, $\ep$ is the size of the neighborhood (measured in trace distance), and $\texit(\ep)$ is the time the system needs to leave the $\ep$-neighborhood of the initial pure state. 
Assessing the escape time requires solving the \emph{inverse quantum speed limit} problem - i.e. finding an upper bound on the time  system needs to depart from the $\ep$-neighborhood of the initial state $\psi_0$ (in the standard quantum speed limit \cite{MandelstamTamm1945} one typically is interested in finding a \emph{lower} bound on this time). We give partial solution to this problem, showing that under mild assumptions  $\texit(\ep)\approx \ep/\sqrt{\Delta(H^2)}$, where $\Delta(H^2)=\langle (H-\langle H\rangle)^2 \rangle $ is the Hamiltonian variance computed on the initial state. 

We emphasize that the recurrence time $\trec(\vep)$ has to be defined as larger than $\texit(\vep)$ -- this condition ensures that the evolution first leaves the neighborhood of $\psi_0$ and only later returns to it. Without imposing this we may have a situation in which evolution $\psi_t$ forever stays in the neighborhood of $\psi_0$ and no meaningful recurrence can be defined. Despite its simplicity this consideration has been absent from the study of recurrence. 

 
 Our results can be strengthened if the state $\psi_0$ is concentrated on a small number of eigenstates of $H$ and generalize to Hamiltonian evolutions in the space of unitary channels equipped with the diamond norm. We can also handle subsequent recurrences, not only the first one. 
 Finally,  we  adapt our techniques cases where Hamiltonian $H$ describes free evolutions of systems of bosons or fermions \cite{Ventuli15,alvaro-recurrence,ReturnProbFree18,experimental1}, for which case recurrence times can be \emph{much shorter} as they are controlled by the dimension of the, single particle Hilbert space rather than dimension of the total multi-particle Hilbert space. 

 From the technical perspective our proofs are elementary --  they rely only on basic geometric properties of spaces in which quantum evolution takes place (the set of pure quantum states or unitary channels equipped with trace distance and diamond norm, respectively) and the fact that the unitary dynamics preserves their geometry.

\section{Notation and setup}

In this work we will be concerned with unitary evolutions in finite-dimensional Hilbert space $\H=\C^d$. These evolutions are solutions of the Schroedinger equation $i\frac{d}{dt}\ket{\psi_t}=H\ket{\psi_t}$, where $H$ is the Hamiltonian of the system.  Let  $\lbrace\ket{k}\rbrace_{k=1}^d$ and $\lbrace\lambda_k\rbrace_{k=1}^d$ denote the eigenstates and  the corresponding eigenvalues of $H$, respectively (we allow for arbitrary degeneracy in the spectrum of $H$).   For an initial state $\psi_0 = \ketbra{\psi_0}$ we have $\psi_t = e^{-iHt}\psi_0 e^{iHt}$. If we write  $\ket{\psi_0}$ in the eigenbasis of $H$ we get $ \ket{\psi_t} = e^{-it H} \ket{\psi_0}= \sum_{k=1}^{d}a_k e^{-i\lambda_k t}\ket{k}$, with $\sum_{k=1}^d |a_i|^2=1$. For an operator $A$ we will denote the expected value  $\tr(\psi_0 A)$   simply as $\avg{A}$. We will denote the variance $\avg{(H - \avg{H})^2}$ as $\varsecond$ and the fourth central moment $\avg{(H - \avg{H})^4}$ as $\varfourth$ .

Consider a metric space $(X,d)$ and suppose we have a time evolution $x_t$ starting from an initial point $x_0$. We define the {\bf exit time} $\texit(\vep)$ on scale $\vep$ as the first time when the time evolution leaves the $\vep$-neighborhood of the initial point:
\begin{equation}\label{eq:def-exit-time}
    \texit(\vep) = \inf\{ t>0 \ \vert \ d(x_0,x_t) \geq \vep \}\ .
\end{equation}
Accordingly, define the {\bf recurrence time} $\trec(\vep)$ on scale $\vep$ to be the first time after $\texit(\vep)$ when the system return to the $\vep$-neighborhood of the initial point:
\begin{equation}\label{eq:def-rec-time}
    \trec(\vep) = \inf\{ t > \texit(\vep) \ \vert \ d(x_0,x_t) \leq \vep \} \ .
\end{equation}
One can define subsequent recurrence times $t^{(k)}_{rec}(\vep)$ in a straightforward way -- the $k+1$'st recurrence is the first time of coming back to within distance $\vep$ of $x_0$ after: i) undergoing the $k$'th recurrence, and then: ii) leaving the neighborhood of $x_0$. We will mostly consider the case where $(X,d)$ is a subset of pure quantum states equipped with the trace distance $\dtr(\rho,\sigma)=\frac{1}{2}\|\rho-\sigma\|_1$  (which for pure quantum states reduces to $\dtr(\psi,\phi)=\sqrt{1-|\langle \psi|{\phi}\rangle|^2}$ ) and the time evolution is given by $\psi_t = e^{-iHt}\psi_0 e^{iHt}$. We can also consider the set of unitary channels equipped with the diamond distance, see Supplementary Material. 


\section{Results}

Our main result provides a general upper bound on the first recurrence time for any Hamiltonian and initial state. 

\begin{theorem}[Recurrence for states]\label{th:recurrence-main-prl}
    For any $H$ and any initial state $\psi_0$ the first recurrence time satisfies:
    \begin{equation}\label{eq:recurrence-states-abstract}
        \trec(\vep)  \leq  \texit(\vep) \cdot \left(\frac{2\pi}{\vep}+1\right)^{d-1}\ .
    \end{equation}
    Furthermore, the $k$'th recurrence time satisfies $\trec^{(k)}(\vep)\leq k\cdot \texit(2\vep) \left(\frac{4\pi}{\vep}+1\right)^{d-1}$.
\end{theorem}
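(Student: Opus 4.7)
My plan exploits the fact that the orbit $\{\psi_t\}$ lies in a $(d-1)$-dimensional torus of relative phases rather than sweeping out the full space of pure states. Writing $\ket{\psi_t}=\sum_{k=1}^{d} a_k e^{-i\lambda_k t}\ket{k}$, the projector $\psi_t$ depends only on the $d-1$ relative phases $\theta_k(t):=(\lambda_k-\lambda_1)t$, the overall phase $e^{-i\lambda_1 t}$ being immaterial, so there is a continuous parameterization $\Phi$ of the orbit by the torus $T^{d-1}$ with $\psi_t=\Phi(\theta(t))$. The first step is a short Lipschitz estimate: expanding $|\langle\psi_\theta|\psi_{\theta'}\rangle|^2 = \sum_{k,j}|a_k|^2|a_j|^2\cos\big((\theta_k-\theta_k')-(\theta_j-\theta_j')\big)$ and using $\cos x \ge 1 - x^2/2$ together with the pure-state identity $\dtr(\psi,\psi')=\sqrt{1-|\langle\psi|\psi'\rangle|^2}$ yields $\dtr(\Phi(\theta),\Phi(\theta'))\le\sqrt{2}\,\|\theta-\theta'\|_{\infty}$. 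Partitioning $T^{d-1}$ into cubes of side $\vep/\sqrt{2}$ therefore decomposes the orbit into at most $N=\lceil 2\pi\sqrt{2}/\vep\rceil^{d-1}\le(4\pi/\vep)^{d-1}$ cells of trace-distance diameter $\le\vep$ (in the physically relevant range $\vep\in(0,1]$).

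The first claim follows from pigeonhole plus the isometric nature of the evolution. Setting $\tau:=\texit(\vep)$, I sample the trajectory at the $N+1$ equispaced times $0,\tau,\dots,N\tau$; two of them, $j\tau<k\tau$, must share a cube, so $\dtr(\psi_{j\tau},\psi_{k\tau})\le\vep$. Unitary invariance of $\dtr$ gives $\dtr(\psi_0,\psi_{(k-j)\tau})\le\vep$, and because $(k-j)\tau\ge\tau=\texit(\vep)$ the trajectory has already exited and returned to $\ball{\psi_0}{\vep}$, yielding $\trec(\vep)\le N\tau\le\texit(\vep)(4\pi/\vep)^{d-1}$.

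For the $k$-th recurrence I would refine the same scheme. Halving the cube side to $\vep/(2\sqrt{2})$ gives $N'\le(8\pi/\vep)^{d-1}$ cells of diameter $\le\vep/2$; with spacing $\tau':=\texit(2\vep)$, sampling at the $kN'+1$ times $0,\tau',\dots,kN'\tau'$ forces $k+1$ samples $t_0<t_1<\dots<t_k$ into a common cube. The shifts $s_i:=t_i-t_0$ are $k$ strictly increasing positive times with $\dtr(\psi_0,\psi_{s_i})\le\vep/2$. To certify an exit from $\ball{\psi_0}{\vep}$ between consecutive $s_i,s_{i+1}$: unitary invariance applied at $\psi_{s_i}$ forces the trajectory to distance $\ge 2\vep$ from $\psi_{s_i}$ within time $\texit(2\vep)$, and since $\dtr(\psi_0,\psi_{s_i})\le\vep/2$ the reverse triangle inequality puts the state at distance $\ge 3\vep/2>\vep$ from $\psi_0$; moreover $s_{i+1}-s_i\ge\tau'=\texit(2\vep)$ because the $t_i$ are distinct multiples of $\tau'$. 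Thus $\trec^{(k)}(\vep)\le t_k\le kN'\tau'\le k\cdot\texit(2\vep)(8\pi/\vep)^{d-1}$. The only step requiring genuine care is the Lipschitz estimate and its implied cube count — the $\sqrt{2}$ in the Lipschitz constant is precisely what turns $2\pi$ into $4\pi$ (respectively $8\pi$) in the advertised bounds; everything else is pigeonhole combined with the key isometric property of unitary evolution, which transports nearness of $\psi_s$ and $\psi_t$ into nearness of $\psi_0$ and $\psi_{t-s}$.
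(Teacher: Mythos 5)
Your proposal is correct and follows essentially the same route as the paper: your cube partition of the relative-phase torus is just an explicit form of the covering/packing bound in \cref{lm:packing-T} (you reach $(4\pi/\vep)^{d-1}$ via the $\ell_\infty$-Lipschitz constant $\sqrt{2}$ instead of the paper's weighted-$\ell_2$ estimate \eqref{eq:l2-phases}), while the pigeonhole-plus-isometry step and the $k$-th-recurrence argument at scale $\vep/2$ with spacing $\texit(2\vep)$ reproduce \cref{prop:t+N} and the Supplemental Material argument. The one point you should make explicit is the boundary case $k-j=1$: since by continuity $\dtr(\psi_0,\psi_{\texit(\vep)})=\vep$ exactly, whereas two samples lying in a common (half-open) cell are at trace distance strictly less than $\vep$, consecutive samples can never collide, so the colliding pair automatically has $k-j\geq 2$ and the return time $(k-j)\,\texit(\vep)\geq 2\,\texit(\vep)$ strictly exceeds $\texit(\vep)$, as the strict inequality in the definition \eqref{eq:def-rec-time} requires.
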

To be useful, the above inequality requires an upper bound on the exit time $\texit(\vep)$. The following result provides such a bound, provided that $\vep$ is small enough. 

\begin{theorem}[Inverse quantum speed limit]\label{th:exit-time-bound-prl}
Assume that $\varsecond > 0$ (equivalently, $\ket{\psi_0}$ is not an eigenstate of $H$). Let $\vep_{\ast} = \frac{\varsecond}{\varsecond + \sqrt{\varfourth}}$. 
    Then for any $\vep < \vep_{\ast}$ the exit time $\texit(\vep)$ is finite and satisfies
        \begin{equation}\label{eq:exit-time-bound-prl}
        \texit(\epsilon) \leq \frac{\vep}{\sqrt{\varsecond}}\cdot\left( 1 -  \frac{\vep}{\vep_{\ast}} \right)^{-1/2} \ .
    \end{equation}
\end{theorem}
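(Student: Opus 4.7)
The plan is to derive an explicit lower bound of the form $\dtr(\psi_0,\psi_t)^2 \geq t^2\,\varsecond - t^4\,C$ and then invert it to bound $\texit(\vep)$. Shifting $H\to H-\avg{H}\id$ multiplies $\ket{\psi_t}$ by a global phase and leaves $|f(t)|:=|\langle\psi_0|\psi_t\rangle|$ unchanged, so I assume $\avg{H}=0$, which makes $\varsecond=\avg{H^2}$ and $\varfourth=\avg{H^4}$. Expanding in the eigenbasis of $H$ and combining the $(j,k)$ and $(k,j)$ terms gives $|f(t)|^2 = \sum_{j,k} p_j p_k \cos((\lambda_j-\lambda_k)t)$ with $p_k=|a_k|^2$.

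The core estimate is the elementary inequality $\cos(y)\leq 1-y^2/2+y^4/24$ for all $y\in\R$ (writing $h(y)$ for the difference one has $h(0)=h'(0)=h''(0)=0$ and $h'''(y)=y-\sin(y)\geq 0$ for $y\geq 0$; three integrations give $h\geq 0$ on $[0,\infty)$, and $h$ is even). Applying this termwise, using the moment identities $\sum_{j,k}p_jp_k(\lambda_j-\lambda_k)^2 = 2\,\varsecond$ and $\sum_{j,k}p_jp_k(\lambda_j-\lambda_k)^4 = 2\,\varfourth + 6\,\varsecond^{2}$ (both of which use $\avg{H}=0$), and then invoking Cauchy--Schwarz in the form $\varsecond^{2}\leq\varfourth$ to estimate $\varfourth+3\,\varsecond^{2}\leq(\varsecond+\sqrt{\varfourth})^{2}$, yields
\begin{equation*}
\dtr(\psi_0,\psi_t)^{2} \;\geq\; t^{2}\,\varsecond\,\Bigl(1 - \tfrac{t^{2}\,\varsecond}{12\,\vep_{\ast}^{2}}\Bigr).
\end{equation*}

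Let $t_{0}>0$ be the smallest $t$ at which the right-hand side equals $\vep^{2}$; the quadratic in $t^{2}$ has positive discriminant whenever $\vep<\sqrt{3}\,\vep_{\ast}$, which certainly includes the range $\vep<\vep_{\ast}$. Rationalising the smaller root via $1-\sqrt{1-u}=u/(1+\sqrt{1-u})$ gives $t_{0}^{2} = 2\vep^{2}/[\,\varsecond(1+\sqrt{1-\vep^{2}/(3\vep_{\ast}^{2})})\,]$, and by construction $\dtr(\psi_0,\psi_{t_0})\geq\vep$, so $\texit(\vep)\leq t_{0}$. It remains to verify $t_{0}\leq \vep/\sqrt{\varsecond\,(1-\vep/\vep_{\ast})}$: setting $v:=\vep/\vep_{\ast}\in(0,1)$ this reduces to the elementary inequality $1-2v\leq\sqrt{1-v^{2}/3}$, trivial for $v\geq 1/2$ and, for $v<1/2$, obtained by squaring and rearranging to $v(13v/3-4)\leq 0$, which holds since $v<1/2<12/13$. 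The main subtle point is that one cannot simply plug $t=t^{\ast}$ directly into the Taylor lower bound on $\dtr^{2}$: doing so yields $\dtr^{2}\geq\vep^{2}$ only for $\vep$ up to roughly $12\vep_{\ast}/13$, strictly smaller than $\vep_{\ast}$; the correct route is to first invert the bound to the smaller $t_{0}$ and only then compare with $t^{\ast}$.
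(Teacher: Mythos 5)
Your proposal is correct, but it takes a genuinely different route from the paper's. The paper works with the fidelity $F(t)=\tr(\psi_0\psi_t)$ and its second derivative $F''(t)=\tr(X_H\psi_t)$ with $X_H=-[H,[H,\psi_0]]$: it combines $\tr(X_H\psi_0)=-2\varsecond$ with a separate rank-two operator-norm lemma $\norm{X_H}_{\infty}\leq\varsecond+\sqrt{\varfourth}$ and the a priori fact that $\dtr(\psi_0,\psi_t)\leq\vep$ before the exit time, then integrates twice --- a bootstrap/differential-inequality argument that is only valid on $[0,\texit(\vep)]$. You instead expand $\abs{\ketscalar{\psi_0}{\psi_t}}^2=\sum_{j,k}p_jp_k\cos((\lambda_j-\lambda_k)t)$ and use the global estimate $\cos y\leq 1-y^2/2+y^4/24$ together with the moment identities and Cauchy--Schwarz ($\varsecond\leq\sqrt{\varfourth}$), obtaining the unconditional lower bound $\dtr(\psi_0,\psi_t)^2\geq t^2\varsecond\bigl(1-t^2\varsecond/(12\vep_{\ast}^2)\bigr)$ valid for all $t$ --- no bootstrap and no operator-norm lemma needed; I checked the moment identities, the bound $\varfourth+3\varsecond^2\leq(\varsecond+\sqrt{\varfourth})^2$, the inversion to the smaller root $t_0$ (which gives $\dtr(\psi_0,\psi_{t_0})\geq\vep$, hence finiteness and $\texit(\vep)\leq t_0$), and the final comparison $1-2v\leq\sqrt{1-v^2/3}$ for $v=\vep/\vep_{\ast}\in(0,1)$, and all steps are sound; your closing caveat about not substituting the target time directly into the Taylor bound is exactly the right care to take. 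What each approach buys: yours is more elementary and self-contained, yields a distance lower bound at all times (and in fact gives finiteness of $\texit(\vep)$ up to $\vep<\sqrt{3}\,\vep_{\ast}$, slightly beyond the stated range), and it is a nice feature that the same combination $\varsecond+\sqrt{\varfourth}$, hence the same $\vep_{\ast}$, emerges from your Cauchy--Schwarz step; the paper's argument avoids the spectral decomposition entirely, using only $\frac{d}{dt}\psi_t=-i[H,\psi_t]$ and trace-norm duality, which makes it more readily adaptable to other settings (mixed states, other generators or distance measures) mentioned in the authors' outlook.
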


The above result depends on the Hamiltonian and the initial state via the variance $\varsecond$ and the fourth central moment $\varfourth$, quantities that together define curvature of time evolution $\psi_t$ in the space of pure quantum states \cite{Laba2017,Browne2024}. Importantly, the  bound from Eq. \eqref{eq:exit-time-bound-prl} is asymptotically optimal, because applying the well-known Mandelstam-Tamm quantum speed limit \cite{MandelstamTamm1945} yields a \emph{lower bound}: $\frac{\arcsin{\vep}}{\sqrt{\varsecond}} \leq \texit(\vep)$ which has a matching behaviour for small $\vep$. We discuss the general issue of finiteness of $\texit$ in the Supplemental Material.

To obtain a concrete bound on recurrence time, we can use the above theorems and put $\vep < \frac{\vep_{\ast}}{2}$. The exit time is then bounded explicitly:
    \begin{equation}\label{eq:recurrence-states-concrete}
        \trec(\vep) \leq \frac{\vep}{\sqrt{2\varsecond}} \cdot \left(\frac{2\pi}{\vep}+1\right)^{d-1} \ .
    \end{equation}


Up until now the question of upper bounding the first recurrence time for general Hamiltonians was mainly studied by Peres \cite{peres}, upon whose results we improve in several ways. The argument from \cite{peres} is only heuristic and applies only to sufficiently generic or random Hamiltonians -- in contrast, our results are fully rigorous and valid deterministically for {\it all} Hamiltonians. We can also handle arbitrary initial states, which enter the bounds via the exit time $\texit(\vep)$, while \cite{peres} only considers uniform superpositions of eigenstates and does not consider the issue of exit time. Our method is also general enough to encompass other settings (unitaries, highly symmetric Hamiltonians).  We note that the main idea of \cite{peres} of connecting recurrence times to certain geometric properties of the $d$-dimensional lattice can be made rigorous using Minkowski's theorem from convex geometry and has inspired our proof of lower bound in \cref{th:lower-bound-random-prl}. Finally, it can also be shown that the upper bound for subsequent recurrences from \cref{th:recurrence-main-prl}  is the best possible -- we discuss its tightness and possible strengthening in \cref{rm:torus-slow} in Appendix.

In our last result we show that the upper bound from \cref{th:recurrence-main-prl} is generically tight in the sense that for a simple model of random Hamiltonian a randomly chosen $H$ will saturate this bound with high probability. 

\begin{theorem}\label{th:lower-bound-random-prl}
    Let $H$ be a random diagonal Hamiltonian with eigenvalues drawn independently and uniformly from $[-1,1]$. Let $\psi_0$ be the uniform superposition of the eigenstates of $H$. Then for $\vep$ small enough with probability at least $1-\exp(-\Omega(d))$ the recurrence time for the evolution started at $\psi_0$ is at least:
    \begin{equation}\label{eq:lower-bound-prl}
        \trec(\vep) \gtrsim  \left( \frac{1}{\vep}\right)^d\ .
    \end{equation}
\end{theorem}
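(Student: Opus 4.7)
The strategy reduces the recurrence criterion to a Diophantine approximation problem, applies a Peres-type lattice argument made rigorous via Minkowski's geometry of numbers, and closes with a concentration bound over the random spectrum. For the uniform superposition, $\braket{\psi_0}{\psi_t}=\frac{1}{d}\sum_{k=1}^{d} e^{-i\lambda_k t}$, and the condition $\dtr(\psi_0,\psi_t)\leq \vep$ is equivalent to
\begin{equation*}
\frac{1}{d^{2}}\sum_{j,k}\bigl(1-\cos((\lambda_j-\lambda_k)t)\bigr)\leq \vep^{2}.
\end{equation*}
Using $1-\cos x \geq (2/\pi^{2})x^{2}$ on $[-\pi,\pi]$, this forces each $(\lambda_k-\lambda_1)t$ to lie within $O(\vep\sqrt{d})$ of a multiple of $2\pi$. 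Writing $\omega_k:=\lambda_k-\lambda_1$, recurrence at time $t$ therefore demands integers $n_2,\ldots,n_d$ satisfying $|\omega_k t-2\pi n_k|\lesssim \vep\sqrt{d}$ for all $k=2,\ldots,d$.

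Next I would lower-bound the first such $t$ by a Minkowski/Peres lattice argument. Consider the rank-$d$ lattice $\Lambda\subset\R^{d}$ generated by the vector $(1,\omega_{2},\ldots,\omega_{d})$ together with $2\pi\h{e}_k$ for $k=2,\ldots,d$; its determinant equals $(2\pi)^{d-1}$. Any recurrence at time $t\leq T$ is, up to a unimodular change of basis, a nonzero lattice vector of $\Lambda$ lying inside the slab $S_{T}:=[0,T]\times[-C\vep,C\vep]^{d-1}$. For a ``badly-approximable'' direction $\omega$ the shortest such vector has $t$-coordinate $\gtrsim 1/\vep^{d-1}$; the further constraint $t>\texit(\vep)\asymp \vep$ from \cref{th:exit-time-bound-prl} rules out the cluster of lattice points near the origin and supplies the extra factor $1/\vep$, yielding the announced $(1/\vep)^{d}$ lower bound.

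Finally I would verify that this badly-approximable property holds with probability $1-\exp(-\Omega(d))$ for the random spectrum. Union-bounding over candidate integer vectors $n\in \Z^{d-1}$ with $\|n\|_{\infty}\lesssim T$: for each $n$, the event that some $t\in[\texit(\vep),T]$ satisfies $|\omega_k t-2\pi n_k|\leq C\vep$ for all $k$ factorises across the independent random coordinates $\omega_{2},\ldots,\omega_{d}$, each contributing a factor $O(\vep/|t|)$ from the uniform law of $\lambda_k$ on $[-1,1]$. Integrating over $t$ and summing over $n$ yields a total failure probability bounded by a product of $d-1$ independent probabilistic factors of size $\sim\vep$; these $\vep^{d-1}$ savings dominate the entropy $\log(T^{d-1})$ provided $\vep$ is sufficiently small, delivering the desired $\exp(-\Omega(d))$ decay.

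The main obstacle will be the balance in this last step: the entropy of candidate vectors grows as $T^{d-1}$, and calibrating $T\asymp (1/\vep)^{d}$ against the per-vector probability saving $\vep^{d-1}$ requires careful control. One has to exploit that the $d-1$ linear constraints on the eigenvalues are genuinely independent (each $\lambda_k$ for $k\geq 2$ appears only in the $k$-th constraint) and that the ``good-$t$'' interval for each candidate $n$ has length proportional to $\vep$, rather than settling for a crude discretisation of $t$. The remainder is geometric/elementary and follows Peres's original lattice heuristic made rigorous.
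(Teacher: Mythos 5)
Your overall strategy is the same family of argument the paper uses (reduce closeness of $\psi_t$ to $\psi_0$ to closeness of $\lambda t$ to a lattice, then union-bound over lattice points and times using the product structure of the random spectrum), but as written it has a quantitative gap that breaks the result in the regime the theorem is about (fixed small $\vep$, large $d$). Your reduction pins the phases to the first eigenvalue: from your displayed inequality $\frac{1}{d^2}\sum_{j,k}(1-\cos((\lambda_j-\lambda_k)t))\le\vep^2$, restricting to $j=1$ only yields $\mathrm{dist}((\lambda_k-\lambda_1)t,2\pi\Z)=O(\vep d)$ per coordinate, and even the optimal argument (comparing each $\lambda_k t$ to the phase of the sum and using the triangle inequality) only gives $O(\vep\sqrt d)$, which is tight because $\lambda_1 t$ itself may sit $\vep\sqrt d$ away from the optimal global phase. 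Yet in your probabilistic step you charge each coordinate a factor $O(\vep/|t|)$, i.e.\ you treat the windows as having width $O(\vep)$. With the actual windows the per-vector probability is $(\vep\sqrt d/t)^{d-2}$-type rather than $(\vep/t)^{d-2}$, an overall loss of $d^{\Theta(d)}$ that cannot be absorbed: the union bound then only closes when $\vep\lesssim 1/\sqrt d$, whereas the theorem is claimed for constant $\vep$ (e.g.\ $\vep<10^{-4}$) and all large $d$. The paper avoids this by keeping the deviation in the Euclidean norm over all $d$ coordinates relative to the \emph{phase-shifted} lattice $(2\pi\Z+\varphi)^d$: a Euclidean ball of radius $O(\vep\sqrt d)$ in $\R^d$ still has volume $(c\vep/t)^d(2t)^d$-type, because $\vol(B(0,\sqrt d))\le 5^d$, and the unknown phase $\varphi$ is handled by a net of size $O(\vep^{-2})$ together with a Lipschitz estimate. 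Your coordinate-wise ($\ell_\infty$) bookkeeping in the difference variables does not reproduce this; you would have to redo the count with $\ell_2$ balls (or adopt the shifted-lattice-plus-phase-net device) for the entropy-versus-probability balance you worry about at the end to actually close.

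There is also a missing ingredient for intermediate times. For $t$ between $\texit(\vep)\sim\vep$ and a constant, the only candidate lattice point is the origin and the per-coordinate factors $\vep/t$ are not small, so no lattice/probability argument can rule out an immediate return: just after $\texit$ the state sits at distance exactly $\vep$, and your remark that $t>\texit$ ``rules out the cluster of lattice points near the origin'' does not prevent re-entry. The paper closes this window deterministically: using the second-derivative bound on the fidelity behind \cref{th:exit-time-bound-prl} (with $\avg{\Delta H^2}$, $\avg{(H-\avg H)^4}$ controlled by Hoeffding with probability $1-e^{-\Omega(d)}$), the distance $\dtr(\psi_0,\psi_t)$ is monotonically increasing up to $\texit(\vep_\ast)$ with $\vep_\ast$ and hence $\texit(\vep_\ast)$ of constant order, so no recurrence occurs before a constant time $T$, and only for $t>T$ does the probabilistic lattice bound take over. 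You need an analogous deterministic (or at least non-lattice) argument for this regime, plus the same high-probability moment control to justify $\texit(\vep)\asymp\vep$ in the first place; with those two repairs and the $\ell_2$ bookkeeping above, your tube-around-the-curve integration over $t$ is a viable alternative to the paper's time discretization.
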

The proof of the above theorem is technical and we present it in the Supplemental Material. 


\section{Proofs}

{\bf Sketch of proof of \cref{th:recurrence-main-prl}}. Let $X$ be a metric space equipped with a distance $d$ invariant with respect to the time evolution. The packing number $\npack(X,\vep)$ quantifies the maximal number of points in $X$ that are at least $\vep$-apart. See the appendix for detailed definitions. The core ingredient of the proof is the following bound, formally proved in \cref{prop:t+N} in the Appendix:
\begin{equation}
    \trec(\vep) \leq \texit(\vep) \cdot \npack(X,\vep).
\end{equation}
The idea behind the bound is to follow the time evolution $x_t$ started at the initial point $x_0$ and consider balls of radius $\vep$ around a suitable discretization $\lbrace x_{t_i}\rbrace$ of the trajectory. After sufficient time, the number of such balls will exceed the packing number $\npack(X,\vep)$, so some two points will end up in the same ball. Because we assume the distance is invariant, this implies that some point will end up near the origin and thus give recurrence (this step shares similarities  to the proof of the classic simultaneous Dirichlet approximation theorem from number theory, cf. \cite{Cassels1957})  . This logic is illustrated in \cref{fig:fig2}. A similar reasoning can be used to obtain bounds for the $k$'th recurrence time and is presented in the Supplemental Material. 
To obtain the proof of \cref{th:recurrence-main-prl}, we take $(X,d)$ to be (the closure of) the subset of states reachable from the initial state, formally defined in \eqref{eq:torus-def}, with trace distance and combine the above proposition with a bound on the packing number $\npack(\vep)$, which is proved in \cref{lm:packing-T} in the appendix. 

{\bf Proof of \cref{th:exit-time-bound-prl}}.   Let $F(t) := \tr( \psi_0 \psi_t)$ denote the fidelity between $\psi_t$ and the initial state $\psi_0$. The basic intuition is that for small $t$ we have $F(t) \approx 1- t^2 \Delta(H^2)$, so that to leave $\epsilon$ neighborhood in trace distance one needs time $\epsilon/\Delta(H^2)$  In what follows we formally prove that the same scaling of of $t_{esc}(\epsilon)$ holds also for finite $\epsilon$ (under the assumption of the theorem). 
Using $\frac{d}{dt}\psi_t = i[H, \psi_t]$ and simple algebra we get:
\begin{equation}\label{eq:f2ndt}
  F''(t) =  \tr(X_H  \psi_t)\ , X_H := -[H,[H,\psi_0]]\ .
\end{equation}
A quick computation shows  $\tr(X_H \psi_0)= -2\varsecond$ and therefore $F''(t) = -2\varsecond + \tr(X_H (\psi_t-\psi_0))$. Using the basic inequality:
$ \abs{\tr(X_H (\psi_t-\psi_0))} \leq  2 \norm{X_H}_{\infty} \dtr(\psi_0, \psi_t)
$ we get 
\begin{equation}\label{eq:2nd-derivative1}
    F''(t) \leq -2\varsecond +2 \norm{X_H}_{\infty} \dtr(\psi_0, \psi_t)\ .
\end{equation}
From Lemma \ref{lm:norm-XH-prl} given in the appendix we get $\norm{X_H}_{\infty}\leq \varsecond + \sqrt{\varfourth}$.  Additionally,  before the exit time we have  $\dtr(\psi_0, \psi_t) \leq \vep$ and therefore for $t\in[0,\texit(\vep)]$  we have  
\begin{equation}\label{eq:fbisINEQ}
    F''(t) \leq -2\varsecond +2 \left( \varsecond + \sqrt{\varfourth} \right) \vep
\end{equation}
For simplicity let us write: 
\begin{equation}\label{eq:AB}
    A := \varsecond, \ B := \varsecond+ \sqrt{\varfourth}.
    \end{equation} 
    Integrating both sides of Eq. \eqref{eq:fbisINEQ} twice (each time from $0$ to $t\in[0,\texit(\vep)])$, and using $F'(0)=0, F(0)=1$ yields:
\begin{gather}
 F'(t) \leq \left( -2A +2  B \vep \right) t  \label{eq:derivative} \\
    F(t) \leq 1 - \left( A - B \vep \right) t^2 \label{eq:ft-fidelity1} 
\end{gather}
By definition at time $\texit(\vep)$ we have $\dtr(\psi_0, \psi_{\texit}) = \vep$. Recalling  that for pure states $\dtr(\psi,\phi) = \sqrt{1 - F(\psi,\phi)}$, we have $F(\texit(\vep)) = 1 - \vep^2$. By plugging this into \eqref{eq:ft-fidelity1} at $t=\texit(\vep)$ we obtain:
\begin{equation}
    1 - \vep^2 \leq 1 - \left( A - B \vep \right) \texit(\vep)^2
\end{equation}
The assumption on $\vep$ implies that the coefficient of $\texit^2$ above is positive, which directly implies Eq. \eqref{eq:exit-time-bound-prl}. 
\section{Further consequences and applications}
{\bf Effective support and effective dimension.} We  observe that \cref{th:recurrence-main-prl} can be strengthened if the initial state $\psi_0$ is mostly supported on a small number of eigenvalues of $H$. In that case, the dynamics, and thus the problem of recurrence, can be restricted to that small subspace at the cost of a small error. Inspired by \cite{support2019} we define $\delta$-effective support of $\psi$ to be the size of the smallest subset $S$ of eigenvectors of $H$ that carry at least $1-\delta$ fraction of the total probability:
\begin{equation}
    \effsupp(\delta) := \min_{S}\left\{\abs{S} \ \vert \ \sum_{i \in S}\abs{a_i}^2 \geq 1-\delta\right\}
\end{equation}
Let $\psi^S$ denote the state obtained from $\psi$ by retaining only coefficients in $S$ and normalizing. It is easy to check that for all times $t$ we  have $F(\psi_t,\psi_t^S)\geq 1-\delta$.  By using $\dtr(\psi,\phi)^2 = 1 -F(\psi,\phi)$ and applying the triangle inequality twice we end up with:
\begin{equation}\label{eq:reduceddistances}
    \dtr(\psi_0,\psi_t) \leq \dtr(\psi^S_0,\psi^S_t) + 2\sqrt{\delta} \ .
\end{equation}
If we now take $\delta = \vep^2/16$, returning $\vep/2$-close to the initial point in the reduced dynamics $\psi^S_t$ implies returning $\vep$-close in the full dynamics $\psi_t$. Also, using \eqref{eq:reduceddistances} we get that exiting $\vep$ neighborhood of $\psi_0$ by the full dynamics $\psi_t$ implies that the reduced dynamics $\psi_t^S$ escaped $\vep/2$ neighborhood of $\psi^S_0$, implying $\texit^S(\ep/2)\leq \texit(\vep)$. Thus, we can apply \cref{th:recurrence-main-prl} to the reduced dynamics on scale $\vep/2$ and obtain:
\begin{equation}\label{eq:recurrence-states-abstract-reduced}
    \trec(\vep)  \leq  \texit(\vep) \cdot \left(\frac{4\pi}{\vep}+1\right)^{\effsupp(\vep^2/16)-1}\ .
\end{equation}
which improves upon \eqref{eq:recurrence-states-abstract} if the effective support is much smaller than $d$. 

Another parameter that has been considered in connection with recurrence is the so-called effective dimension of $\psi$ \cite{EQ1,EQ2,EQ3} equal to $d_{\text{eff}}^{-1}=\sum_{k} |\alpha_k|^4$. A recent paper \cite{alvaro-recurrence}  proved \emph{lower bounds} on the average recurrence time $\langle t_{rec} \rangle \gtrsim e^{d_{\text{eff}}}$. We show that this parameter not necessarily controls the true behavior of recurrence. Namely, in the Supplemental Material we provide a version of \cref{th:lower-bound-random-prl} with initial states $\psi_0$ which have small effective dimension (e.g. $\sim \log d$, and in fact can be taken to be growing arbitrarily slowly with $d$) compared to its effective support ($\sim d$) and satisfies \eqref{eq:lower-bound-prl}. This shows that it's the effective support, rather than the effective dimension, that controls the lower bound on recurrence time. See \cref{th:lower-bound-random-eta} in the Supplemental Material for technical statement.

{\bf Symmetric Hamiltonians.} We obtain much stronger bounds on $\trec$ for free evolutions of multiparticle systems. Specifically, assume that an $n$ particle state $\psi_t$ evolves under a non-interacting Hamiltonian: 
\begin{equation}\label{eq:nonInteractingHamiltonian}
    H_n=H\otimes\overbrace{\mathbb{I}\otimes \ldots \otimes\mathbb{I}}^{n-1} + \mathbb{I}\otimes H \otimes\overbrace{\mathbb{I} \otimes\ldots \otimes\mathbb{I}}^{n-2}+\ldots\ ,
\end{equation}
where $H$ is a Hamiltonian on a single particle space $\mathbb{C}^d$. Then, the recurrence time of $\psi_t$ is bounded by 
\begin{equation}\label{eq:free_recurrence}
    \trec(\vep)  \leq  \texit(\vep) \cdot \left(\frac{4\pi n}{\vep}\right)^{d-1}\ ,
\end{equation}
which with increasing $n$ significantly improves over Eq. \eqref{eq:recurrence-states-abstract} as dimension of a single particle space $d$ is much smaller than that of $n$ qudit Hilbert space $(\mathbb{C}^d)^{\otimes n}$ ($d^n$), or its fermionic and bosonic subspaces (of dimensions $\binom{d}{n}$ and $\binom{n+d-1}{n}$ respectively). To obtain Eq. \eqref{eq:free_recurrence} we follow the sketch of the proof of \cref{th:recurrence-main-prl} and take $X$ as the space of unitary channels on $(\mathbb{C}^d)^{\otimes n}$ corresponding to unitaries of the form $U^{\otimes n}$, with $U=\sum_{k=1}^d \exp(i \varphi_k) \ketbra{k}$ equipped with the diamond norm (see Supplemental Material for details). A similar result can be derived also for evolutions generated by fermionic quadratic Hamiltonians acting on the fermionic Fock space.

\section{Conclusions and open problems}

We have presented rigorous quantitative bounds on the recurrence times in unitary dynamics of pure quantum states in finite-dimensional quantum systems that incorporate the, previously not investigated, issue of finding an upper bound on the exit time $\texit$ (inverse quantum speed limit). Our results apply deterministically to arbitrary initial states and Hamiltonians and can be further tailored to initial states with support on a small number of eigenstates as well as when the Hamiltonian describes evolution of non-interacting particles. 
Our work leaves a number of open problems for further work. It would be interesting to generalize our deterministic bounds to dynamics of mixed quantum states, expectation values of observables on evolved state and prove tight recurrence bound for physical dynamics induced by local non-commuting Hamiltonians. It also seems natural, especially in the context of dynamics after a quench \cite{quench1,quench2,quench3}, to extend the results regarding inverse quantum speed limits to mixed states, open quantum system dynamics and dynamics of expectation values. 

\begin{acknowledgments}

We thank Alvaro Alhambra, Mischa Woods, Martin Kliesch, Maciej Lewenstein and Daiki Suruga for interesting discussions. The authors acknowledges support from National Science Center, Poland within the QuantERA III Programme (No 2023/05/Y/ST2/00140 acronym Tuquan). The C4QEC project is carried out within the IRAP of the Foundation for Polish Science co-financed by the European Union.
\end{acknowledgments}



\nocite{*}
\bibliography{biblio}

\clearpage

\part*{End matter}\label{sec:appendix}

For a metric space $(X,d)$ we formally define packing and covering numbers as:
\begin{align} 
    \npack(X,\vep) &:= \max\big\{|S|: \forall_{x,y\in S}\ d(x,y)\geq \vep, ~S\subseteq X\big\} \nonumber \\ \nonumber
    \ncov(X,\vep) &:= \min\left\{|S|:X = \bigcup_{x\in S} B(x,\vep), ~S\subseteq X\right\} \nonumber.
\end{align}
It is a standard textbook fact (see e.g. \cite{vershynin}, Lemma 4.2.8) that covering and packing numbers are related as follows:
\begin{align}
\label{eq:pack-cov}
    \ncov(X,2\vep)\leq \npack(X,\vep) \leq \ncov(X,\vep)\,.
\end{align}

We now formally state and prove the core proposition used in the proof of \cref{th:recurrence-main-prl} for the first recurrence time. A version covering also the case of subsequent recurrences is shown in the Supplemental Material.

\begin{proposition}\label{prop:t+N}
Let $(X,d)$ be a metric space and let $\varphi_t$ be an action on $X$ of a one-parameter group that leaves the metric invariant, i.e. $d(\varphi_t(x), \varphi_t(y)) = d(x,y)$. Then the first recurrence time satisfies:
\begin{equation}\label{eq:first-recurrence}
    \trec(\vep) \leq \texit(\vep) \cdot \npack(X,\vep)
\end{equation}


\end{proposition}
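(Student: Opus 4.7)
The plan is to combine a pigeonhole argument over a discretely sampled orbit with the isometry assumption to produce a recurrence, and then verify that this recurrence occurs strictly after the exit time.

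First, I would set $\tau := \texit(\vep)$, $N := \npack(X,\vep)$, and consider the $N+1$ sample points $x_0, x_\tau, x_{2\tau}, \ldots, x_{N\tau}$ along the orbit $x_t = \varphi_t(x_0)$. By the defining property of the packing number, no subset of $X$ of cardinality strictly larger than $N$ can be pairwise $\vep$-separated, so there must exist indices $0 \leq i < j \leq N$ with $d(x_{i\tau}, x_{j\tau}) < \vep$.

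Next I would invoke the invariance hypothesis: applying the isometry $\varphi_{-i\tau}$ to both arguments and using the one-parameter group property $\varphi_{-i\tau}(x_{j\tau}) = x_{(j-i)\tau}$ yields
\begin{equation}
d(x_0, x_{k\tau}) < \vep, \qquad k := j - i \in \{1, 2, \ldots, N\}.
\end{equation}
In particular, $t = k\tau$ is a candidate return time satisfying $t \leq N\tau = \texit(\vep)\cdot\npack(X,\vep)$.

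The step I expect to be most delicate --- and the main obstacle --- is verifying the strict inequality $k\tau > \texit(\vep)$ demanded by the definition of $\trec$ in Eq.~\eqref{eq:def-rec-time}. For $k \geq 2$ this is immediate since $k\tau \geq 2\texit > \texit$. The remaining case $k=1$ would imply $d(x_0, x_{\texit}) < \vep$; but continuity of $t \mapsto d(x_0, x_t)$ (which holds for the flows considered, and may be added as a mild regularity hypothesis in the abstract statement) combined with $\texit$ being the infimum of $\{t : d(x_0, x_t) \geq \vep\}$ forces $d(x_0, x_{\texit}) \geq \vep$, ruling out $k=1$. Should one wish to avoid this regularity assumption entirely, the sampling time can instead be taken as $\tau_n \searrow \texit(\vep)$ along a sequence with $d(x_0, x_{\tau_n}) \geq \vep$ (guaranteed by the infimum definition), running the same argument for each $\tau_n$ and passing to the limit. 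Either way one concludes $k \geq 2$ and hence $\trec(\vep) \leq k\tau \leq \texit(\vep) \cdot \npack(X,\vep)$, establishing~\eqref{eq:first-recurrence}.
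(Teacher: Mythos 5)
Your argument is correct and takes essentially the same route as the paper's proof: sample the orbit at multiples of $\texit(\vep)$, apply the pigeonhole principle against $\npack(X,\vep)$, and use the isometry (group) property to translate the close pair back to $x_0$. Your treatment of the $k=1$ case --- using continuity, or the limiting sequence $\tau_n \searrow \texit(\vep)$, to guarantee the return occurs strictly after $\texit(\vep)$ --- is in fact slightly more careful than the paper, which simply remarks that $t \geq \texit(\vep)$ suffices.
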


\begin{proof}

 Suppose first that $\varphi: X\to X$ is an isometry of $X$. We claim that for any $x_0 \in X$  there exists some $1 \leq k \leq  \npack(X,\vep) $ such that:
    \begin{equation}
        d(x_0, \varphi^k(x_0)) < \vep
    \end{equation}
    where $\varphi^k$ denotes the $k$-fold composition of $\varphi$.

 To prove this, let $N=\npack(X,\vep)$ and consider the sequence of points $x_k= \varphi^k(x_0) $ for $ k=0,\dots,N$. Since the number of points is $N+1 > \npack(X,\vep)$, by the definition of the packing number there exist some $i < j$ such that $d(x_i,x_j) < \vep$, which translates to:
    \begin{equation}
    d(\varphi^i(x_0), \varphi^j(x_0)) < \vep
    \end{equation}
    Since $\varphi$ is an isometry of a compact metric space, it is invertible and the inverse $\varphi^{-1}$ is also an isometry. By applying $\varphi^{-i}$ to both sides we obtain:
    \begin{equation}
     d(x_0, \varphi^{j-i}(x_0)) < \vep
    \end{equation}
    so the claim follows with $k=j-i$.

    Now, by assumption for any $t$ the map $\varphi_t$ is an isometry with respect to $d$. Let $\varphi := \varphi_{\texit}$. By applying the above claim to $\varphi$ we obtain that for some $1 \leq k \leq \npack(X,\vep)$ we have:
    \begin{equation}
        d(x_0 , \varphi_{k\texit}(x_0) ) < \vep
    \end{equation}
    so a recurrence has occurred at time $t = k\texit$, which is indeed a recurrence since $t \geq \texit$.


\end{proof}

\begin{figure}
    \centering
    \includegraphics[scale=0.25]{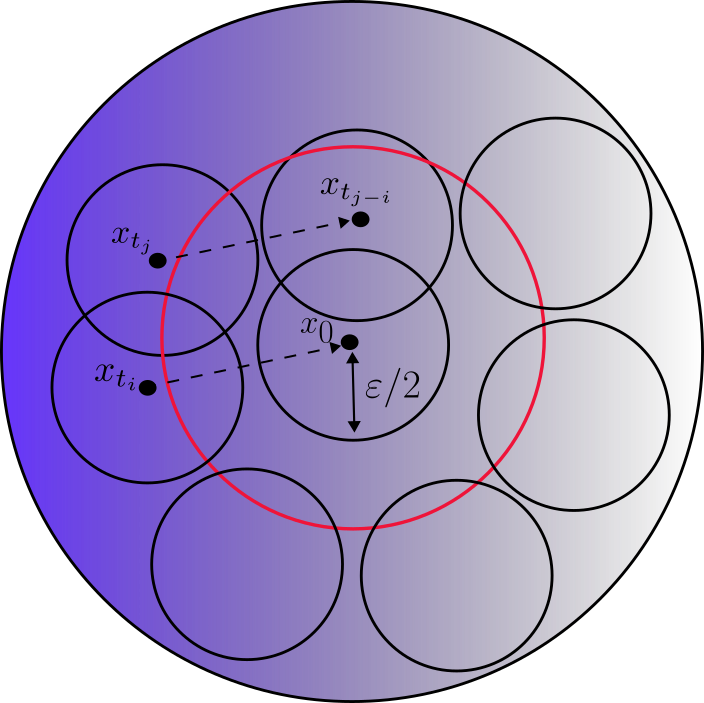}
    \caption{Main part of the proof of \cref{prop:t+N}. If the number of points $x_k= \varphi^k(x_0)$ is greater than the packing number $\npack(X,\vep)$, there exists a point $x_{j-i}$ within distance $\vep$ of $x_0$. The ball of radius $\vep$ around $x_0$ is shown in red. }
    \label{fig:fig2}
\end{figure}

For an initial state $\psi_0$  let $\mathbb{T}_{\psi_0}$ be the subset of quantum states defined as:
\begin{equation}\label{eq:torus-def}
    \mathbb{T}_{\psi_0} := \{ \ketbra{\psi} \ \vert \ \ket{\psi}= \sum_{k=1}^{d} a_k e^{i\varphi_k} \ket{k}, \varphi_k \in [0,2\pi) \}
\end{equation}
where $a_k = \langle k|{\psi_0}\rangle$ and $\ket{k}$ are the eigenstates of $H$. 

It will be convenient to also use the Bures distance between states, $\dbures(\psi,\phi) = \sqrt{2 - 2\abs{\langle \psi \vert \phi \rangle}}$, with the following easily proved inequalities:
\begin{equation}\label{eq:dtr-hs}
\frac{1}{\sqrt{2}}  \dbures(\psi,\phi)\leq \dtr(\psi, \phi) \leq  \dbures(\psi,\phi).
\end{equation}


\begin{lemma}\label{lm:packing-T}
    The packing number $\npack( \mathbb{T}_{\psi_0},\vep)$ with respect to the trace distance satisfies:
    \begin{equation}
       \npack( \mathbb{T}_{\psi_0},\vep) \leq \left( \frac{2\pi}{\vep}+1 \right)^{d-1}
    \end{equation}
\end{lemma}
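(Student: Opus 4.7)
The plan is to construct an explicit $\vep$-covering of $\mathbb{T}_{\psi_0}$ via a grid of phases and then invoke the standard packing/covering comparison \eqref{eq:pack-cov}.

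First, I would parametrize $\mathbb{T}_{\psi_0}$. A generic element has the form $|\psi(\varphi)\rangle = \sum_{k=1}^d a_k e^{i\varphi_k}|k\rangle$, and the projector $\ketbra{\psi(\varphi)}$ is invariant under a simultaneous global shift $\varphi_k \mapsto \varphi_k + c$. Fixing the gauge $\varphi_1 = 0$ identifies $\mathbb{T}_{\psi_0}$ with a point in $[0,2\pi)^{d-1}$, so $d-1$ free phases remain. Next, I would use the right-hand inequality in \eqref{eq:dtr-hs}: for two phase vectors $\varphi,\tilde\varphi$ with $\varphi_1 = \tilde\varphi_1 = 0$,
\begin{equation}
\dtr\!\left(\psi(\varphi),\psi(\tilde\varphi)\right) \;\leq\; \bigl\|\,|\psi(\varphi)\rangle - |\psi(\tilde\varphi)\rangle\,\bigr\|_2 \;=\; \Bigl(\sum_{k=1}^d |a_k|^2\,|e^{i\varphi_k}-e^{i\tilde\varphi_k}|^2\Bigr)^{1/2}.
\end{equation}
Applying the elementary estimate $|e^{i\alpha}-e^{i\beta}| \leq |\alpha-\beta|$ and $\sum_k |a_k|^2 = 1$, this is bounded by $\max_k |\varphi_k-\tilde\varphi_k|$.

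Second, I would construct the covering. Place a regular grid on $[0,2\pi)^{d-1}$ with spacing $\delta$, consisting of $N_\delta := \lceil 2\pi/\delta\rceil^{d-1}$ points. Every phase vector $\varphi$ is within $\ell_\infty$-distance $\delta/2$ of a grid point $\tilde\varphi$; by the previous step this yields $\dtr(\psi(\varphi),\psi(\tilde\varphi)) \leq \delta/2$. Choosing $\delta = 2\vep$ produces an $\vep$-covering of $\mathbb{T}_{\psi_0}$ of size $\lceil \pi/\vep\rceil^{d-1} \leq (2\pi/\vep)^{d-1}$, so $\ncov(\mathbb{T}_{\psi_0},\vep) \leq (2\pi/\vep)^{d-1}$. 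The packing bound $\npack \leq \ncov$ from \eqref{eq:pack-cov} then gives $\npack(\mathbb{T}_{\psi_0},\vep) \leq (2\pi/\vep)^{d-1} \leq (4\pi/\vep)^{d-1}$, which is the claimed inequality (with some slack in the constant).

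There is no genuine technical obstacle; the only mildly delicate point is handling the $U(1)$ gauge freedom correctly so that the exponent is $d-1$ rather than $d$. Everything else is a one-line Lipschitz estimate on $\alpha\mapsto e^{i\alpha}$ combined with the elementary volume count of a grid on the $(d-1)$-torus.
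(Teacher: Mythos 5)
Your proof is correct and takes essentially the same route as the paper: reduce packing to covering via \eqref{eq:pack-cov}, quotient out the global phase (the paper does this by choosing $\varphi=\varphi_1$ in the infimum defining the projective distance), and cover the remaining $d-1$ relative phases by a grid, using $|e^{i\alpha}-e^{i\beta}|\le|\alpha-\beta|$ together with $\sum_k |a_k|^2=1$. Your grid even gives the slightly better count $(2\pi/\vep)^{d-1}$, which of course still implies the stated $(4\pi/\vep)^{d-1}$ bound.
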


\begin{proof}
     By \eqref{eq:pack-cov} it suffices to upper bound the covering number  $\ncov(\mathbb{T}_{\psi_0},\vep)$. We consider $\mathbb{T}_{\psi_0}$ equipped with the Bures metric $\dbures$.
By inequalities \eqref{eq:dtr-hs} any $\vep$-covering of $\mathbb{T}_{\psi_0}$ with respect to the Bures distance is an $\vep$-covering with respect to the trace distance, so it suffices to upper bound $\ncov(\mathbb{T}_{\psi_0},\vep)$ with respect to the Bures distance.
    
Let $Y_{\psi_0}$ be the set of vectors $\ket{\psi}$ of the form:    
\begin{equation}
    \ket{\psi} = \sum_{k=1}^{d} a_k e^{i\varphi_k} \ket{k}
\end{equation}
with $\varphi_1=0$, equipped with the Euclidean metric $\norm{ \ket{\psi} - \ket{\psi'}}_2$. Let $P: Y_{\psi_0}\to \mathbb{T}_{\psi_0}$ be the quotient map $\ket{\psi}\to\ketbra{\psi}$. We easily check that $P$ is distance-nonincreasing. For such maps it is evident that:
\begin{equation}\label{eq:ncov-P}
    \ncov(P(X),\vep) \leq \ncov(X,\vep)
\end{equation}
Since $P$ is surjective, to bound the covering number of $\mathbb{T}_{\psi_0}$ it thus suffices to provide a covering for $Y_{\psi_0}$ with respect to the Euclidean distance. Let $n=\lceil \frac{2\pi}{\vep}\rceil$ and let $N$ be the set of states with phases of the form $x=(1,x_{i_{2},2}, \dots, x_{i_{d},d})$, where each $i_k=2,\dots, n$ and $x_{j,k} = e^{i \frac{2\pi j}{n}}$. We claim that the set $N$ forms an $\vep$-covering of $Y_{\psi_0}$. To show this, first observe that for two vectors $\ket{\psi}, \ket{\psi'} \in Y_{\psi_0}$:
\begin{equation} \label{eq:l2-phases}
    \begin{split}
    &\norm{\ket{\psi} - \ket{\psi'}}^2_2  = 
    2\sum_{k=2}^{d}\abs{a_k}^2 (1-\cos(\varphi_k-\varphi'_k)) = \\
    &4\sum_{k=2}^{d}\abs{a_k}^2 \sin^2\left( \frac{\varphi_k-\varphi'_k}{2} \right) \leq \sum_{k=2}^{d}\abs{a_k}^2 (\varphi_k-\varphi'_k)^2
\end{split} 
\end{equation}
where we have used the assumption $\varphi_1 = \varphi'_1 =0$. By construction of the set $N$ for every $k\geq 2$ the point $\varphi_k$ is $\vep$-close on the unit circle to $\varphi'_k = x_{i_k,k}$ for some $i_k$.  Thus:
    \begin{equation}
        \ncov( \mathbb{T}_{\psi_0}, \vep) \leq n^{d-1} =  \left\lceil \frac{2\pi}{\vep}\right\rceil ^{d-1} \leq \left( \frac{2\pi}{\vep} +1\right)^{d-1}
    \end{equation}
    which finishes the proof.
\end{proof}

\begin{remark}\label{rm:torus-slow}
Our bound on subsequent recurrence times from \cref{th:recurrence-main-prl} can be interpreted as saying that each recurrence takes time $\lesssim (1/ \vep)^d$ on average. It is natural to ask whether this can be strengthened to saying that each recurrence simply takes time bounded by $\lesssim (1/\vep)^d$ (without the average) The following counterexample, involving slow and fast timescales, shows that this cannot be done in general. Fix $\vep$ and consider a qutrit state $\ket{\psi_t} = a_1 \ket{1} + a_2 e^{-i\lambda_2 t} \ket{2} + a_3 e^{-i\lambda_3 t} \ket{3}$ with $\abs{a_1}^2=\frac{1}{2},\abs{a_2}^2=\frac{1}{2}- \delta, \abs{a_3}^2=\delta$ with $\delta \sim \vep^2$ and $\lambda_3 \gg \lambda_2$. By using the identity $\dtr(\psi_t, \psi_0)^2 = 1 - F(\psi_t,\psi_0)$ one can easily see that the evolution has two regimes. For times $t \lesssim \frac{\vep}{\lambda_2}$ the $a_2$ term is roughly constant and exits and recurrences happen on timescale $\frac{1}{\lambda_3}$. Once $t \sim \frac{\vep}{\lambda_2}$, the system will not undergo another recurrence for time $\sim \frac{1}{\lambda_2}$. Therefore, the time of the slow recurrence can be made arbitrarily larger than any function of $\vep$ and $\texit \sim \frac{1}{\lambda_3}$ by tuning the ratio $\frac{\lambda_3}{\lambda_2}$. This precludes a uniform bound on each recurrence that would depend only on $\vep$ and $\texit$, while it still conforms to the bound from \cref{th:recurrence-main-prl}. This example has effective support equal to $2$ and saturates the version of an upper bound on subsequent recurrences with effective support instead of dimension.
\end{remark}

\begin{lemma}\label{lm:norm-XH-prl}
    We have $\norm{X_H}_{\infty} \leq \varsecond + \sqrt{\varfourth}$.
\end{lemma}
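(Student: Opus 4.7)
The plan is to reduce the computation of $\norm{X_H}_\infty$ to the operator norm of an explicit $3\times 3$ Hermitian matrix, then bound that norm using the geometry of its characteristic polynomial. Without loss of generality I would assume $\avg{H}=0$ (subtracting a multiple of the identity from $H$ alters neither $X_H$ nor the centered moments), and abbreviate $A = \varsecond$, $D = \varfourth$. Starting from $X_H = 2H\psi_0 H - H^2\psi_0 - \psi_0 H^2$, one checks that $X_H$ vanishes on the orthogonal complement of the Krylov subspace $\mathcal{K} = \mathrm{span}(\ket{\psi_0}, H\ket{\psi_0}, H^2\ket{\psi_0})$ and maps $\mathcal{K}$ into itself. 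Since $\dim \mathcal{K} \leq 3$, it suffices to control the norm of the restriction $X_H|_{\mathcal{K}}$.

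I would then build a Gram--Schmidt basis of $\mathcal{K}$: put $\ket{e_1} = \ket{\psi_0}$, $\ket{e_2} = H\ket{\psi_0}/\sqrt{A}$ (orthogonal to $\ket{e_1}$ since $\avg{H}=0$), and take $\ket{e_3}$ as the unit vector along the component of $H^2\ket{\psi_0}$ perpendicular to $\ket{e_1},\ket{e_2}$. Decomposing $H^2\ket{\psi_0} = A\ket{e_1} + \beta\ket{e_2} + c\ket{e_3}$ with $c \geq 0$ gives the identity $|\beta|^2 + c^2 = D - A^2$, and substituting yields the matrix representation
\begin{equation*}
M = \begin{pmatrix} -2A & -\bar\beta & -c \\ -\beta & 2A & 0 \\ -c & 0 & 0 \end{pmatrix}.
\end{equation*}

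Finally I would bound the spectrum of $M$ by analyzing its characteristic polynomial $P(\lambda) = \lambda^3 - (3A^2 + D)\lambda + 2Ac^2$. Setting $E := A + \sqrt{D}$, short direct calculations give $P(E) = 2A(|\beta|^2 + 2c^2) \geq 0$ and $P(-E) = -2A|\beta|^2 \leq 0$. Because the positive critical point of $P$ is at $\sqrt{(3A^2 + D)/3}$, and a quick inequality confirms that $E$ lies to its right, $P$ is monotonically increasing on both $[E,\infty)$ and $(-\infty,-E]$. Combined with the sign information at $\pm E$, this forces every real root of $P$ into $[-E, E]$, and since $M$ is Hermitian its eigenvalues are precisely those real roots; hence $\norm{X_H}_\infty = \norm{M}_\infty \leq E = \varsecond + \sqrt{\varfourth}$. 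The only real obstacle is the bookkeeping required to derive the exact form of $M$ and verify the clean cancellations in $P(\pm E)$; the conceptual engine is the Krylov-subspace reduction, without which a naive triangle inequality would deliver only the weaker bound $2(A + \sqrt{D})$.
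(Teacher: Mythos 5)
Your argument is correct, but it follows a genuinely different route from the paper's. The paper works with un-centered moments until the last line: it splits $X_H$ (up to an overall sign) into the rank-two piece $H^2\psi_0+\psi_0 H^2$ and the positive rank-one piece $2H\psi_0 H$, reads off the eigenvalues $\avg{H^2}\pm\sqrt{\avg{H^4}}$ of the former from a $2\times 2$ matrix in the non-orthogonal basis $\{\ket{\psi_0},H^2\ket{\psi_0}\}$, sandwiches $X_H$ between $\pm\big(\avg{H^2}+\sqrt{\avg{H^4}}\big)I$ using $0\leq 2H\psi_0 H\leq 2\avg{H^2}\,I$, and only then applies the gauge shift $H\mapsto H-\avg{H}I$ (which you perform at the outset) to pass to central moments. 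You instead represent $X_H$ exactly as a $3\times 3$ Hermitian matrix on the Krylov subspace and localize the roots of its characteristic cubic; I checked the invariance of $\mathcal{K}$, the form of $M$, the identity $\abs{\beta}^2+c^2=D-A^2$, the polynomial $P(\lambda)=\lambda^3-(3A^2+D)\lambda+2Ac^2$, the evaluations $P(E)=2A(\abs{\beta}^2+2c^2)\geq 0$ and $P(-E)=-2A\abs{\beta}^2\leq 0$, and the monotonicity step (indeed $E^2\geq A^2+D\geq(3A^2+D)/3$), so the bound follows. The trade-off: the paper's sandwich is shorter and needs no Gram--Schmidt bookkeeping, while your exact reduction retains the full spectral information (in principle a sharper, state-dependent bound could be read off from $P$) and makes explicit why the naive triangle inequality only yields $2(A+\sqrt{D})$. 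For a finished write-up you should dispose of the degenerate cases where $A=0$ or $\dim\mathcal{K}<3$, in which $\ket{e_2}$ or $\ket{e_3}$ is not defined by your recipe ($A=0$ forces $X_H=0$, and $c=0$ merely zeroes the last row and column of $M$), analogously to the paper's remark on $H^2\ket{\psi_0}=0$; note also that $\beta=\avg{H^3}/\sqrt{A}$ is real, so the conjugates in $M$ are not actually needed.
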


\begin{proof}
    Recall that $X_H = H^2\psi_0 - 2H\psi_0 H + \psi_0 H^2$. Assume first that $H^2\ket{\psi_0} \neq 0$. We first compute the eigenvalues of $A = H^2\psi_0 + \psi_0 H^2$. It is clearly self-adjoint and of rank at most two, with image spanned by $\ket{\psi_0}$ and $H^2 \ket{\psi_0}$. The matrix of $A$ in this basis reads as (note that the matrix below is not self-adjoint since the basis need not be orthogonal):
    \begin{equation}
        \begin{pmatrix}
             \avg{H^2} & \avg{H^4} \\
             1 & \avg{H^2} \\
        \end{pmatrix}
    \end{equation}
    The eigenvalues of $A$ are thus $\lambda_{\pm} = \avg{H^2} \pm \sqrt{\avg{H^4}}$. This formula holds also if $H^2\ket{\psi_0}=0$ since then both sides are equal to zero.

    Since $2 H\psi_0 H \geq 0$ we can bound:
    \begin{equation}\label{eq:xh-comp}
        (\avg{H^2} - \sqrt{\avg{H^4}})I - 2H\psi_0 H \leq X_H \leq (\avg{H^2} + \sqrt{\avg{H^4}})I
    \end{equation}
    The operator $H\psi_0 H$ is of rank one and its nonzero eigenvalue is equal to $\avg{H^2}$. Altogether by \eqref{eq:xh-comp} this implies that:
    \begin{equation}
        (-\avg{H^2} - \sqrt{\avg{H^4}})I \leq X_H \leq (\avg{H^2} + \sqrt{\avg{H^4}})I
    \end{equation}
    which implies that $\norm{X_H}_{\infty} \leq \avg{H^2} + \sqrt{\avg{H^4}}$. Finally, we observe that $X_H$ is invariant with respect to shifting $H$ by a multiple of identity, so we can use this gauge freedom to obtain $\avg{\Delta H^2} + \sqrt{\avg{(H-\avg{H})^4}}$ on the right hand side, which finishes the proof. 
\end{proof}

\newpage
\onecolumngrid

\setcounter{theorem}{0}
\renewcommand{\thetheorem}{S\arabic{theorem}}
\renewcommand{\thelemma}{S\arabic{lemma}}
\renewcommand{\theproposition}{S\arabic{proposition}}
\renewcommand{\theassumption}{S\arabic{assumption}}

\setcounter{section}{0}

\part*{Supplemental Material}

\section{Subsequent recurrences}

We prove the claimed bound on the subsequent recurrence times after the first recurrence. The logic of the proof is the same as for the first recurrence -- the only difference is that we need a version of \cref{prop:t+N} for the $k$'th recurrence time. We claim that for any $k \geq 1$ the $k$'th recurrence time satisfies:
\begin{equation}\label{eq:subseq-packing}
    \trec^{(k)}(\vep) \leq k\cdot \texit(2\vep) \cdot \npack(X,\vep/2)
\end{equation} 
The bound from \cref{th:recurrence-main-prl} follows by inserting the packing number bound, as for the first recurrence.

The proof of \eqref{eq:subseq-packing} follows from a similar reasoning as the proof of \cref{prop:t+N}. Suppose that $\varphi: X\to X$ is an isometry of $X$. Fix $k\geq 1$ and let $N_k := k\cdot\npack(X,\vep/2)$. Fix $x_0 \in X$ and let $x_l := \varphi^l(x_0)$. We claim that there exist $k+1$ points $x_{i_{j}}$, with $0=x_{j_{0}} < i_1 < \dots < i_k \leq N_k$, such that $d(x_0,x_{i_{j}}) < \vep$. Let $\mathcal{N} \subseteq \{x_i\}_{i=0}^{N_k}$ be any maximal set of points which are pairwise $\geq\vep/2$ apart, so that $\abs{\mathcal{N}} \leq \npack(X,\vep/2)$. By the pigeonhole principle, there will be some  $k+1$ points $x_{l_{1}},\dots,x_{l_{k+1}}$, with $l_1 < \dots < l_{k+1}$  and with $x_{l_{j}} \in \mathcal{N}$ for some $j$, such that $d(x_{l_{j}}, x_{l_{j'}}) < \vep/2$ for all $j'=1,\dots,k+1$. From the triangle inequality it follows that $d(x_{l_{1}},x_{l_{j'}}) < \vep$, so by invariance $d(x_{0},x_{l_{j'}-l_{1}}) < \vep$, so it suffices to put $i_j := l_{j+1} - l_{1}$ for $j=0,\dots,k$.

To obtain recurrence from above claim we apply it with $\varphi := \varphi_{T}$, where $T=\texit(2\vep)$. Let $\texit(x,y,\vep)$ be the time to leave the neighborhood of $y$ when started from $x$. Clearly we have the following bound:
 \begin{equation}
     \texit(x_k, x_0, \vep) \leq \texit(x_k, x_k, 2\vep) = \texit(x_0, x_0, 2\vep) 
 \end{equation}
where the first inequality holds since having left the ball of radius $2\vep$ around $x_k$ implies having left the ball of radius $\vep$ around $x_0$ and the second equality follows from the invariance of dynamics. The above inequality and the choice of $T$ imply that the time between $x_{i_{j}}$ and $x_{i_{j+1}}$ is at least $\texit(x_{i_{j}},x_0,\vep)$, which implies that each point $x_{i_{j}}$ is a new recurrence and thus finishes the proof.

\section{Recurrence for dynamics in the space of unitary channels}

Every unitary operator $U$ on $\H$ defines a unitary channel $\U$ via conjugation $\U(\rho)=U\rho U^\dg$. Note that all unitary operators $\exp(i \varphi) U$, which differ from $U$ by a global phase, define the same unitary channel $\U$. We denote the set of quantum channels on $\H$ by $\UU(d)$. We choose to work with the following definitions of distance between unitary channels:
\begin{equation}\label{eq:distancesDEF}
    \d(\U,\V)= \min_{\varphi\in[0,2\pi)} \|U-\exp(i \varphi) V \| _\infty,
\end{equation}
where $\|\cdot\|_\infty$ is the operator norm. This distance describes the optimal statistical distinguishability of unitary channels, which follows from the relation between the operator norm and the diamond norm:
\begin{equation}\label{eq:equivDISTANCE}
\d(\U,\V) \leq \|\U-\V\|_\diamond \leq 2\, \d(\U,\V)\,,
\end{equation}
for a proof of this see, for instance \cite{OSH2020}.

Fix a Hamiltonian $H$. Let $\U_t$ denote the quantum channel corresponding to the time evolution unitary $U_t = e^{iHt}$. Note that $U_0 = I$. The exit and recurrence times in this setting are defined by \eqref{eq:def-exit-time} and \eqref{eq:def-rec-time}. We now restate the main upper bound from \cref{th:recurrence-main-prl} for the case of unitary channels.

 \begin{theorem}[Recurrence for unitaries]\label{th:recurrence-unitary}
    Consider the set of quantum channels equipped with the diamond norm. For any Hamiltonian $H$ with eigenvalues $\lambda_{min}=\lambda_1 \leq\dots\leq \lambda_d = \lambda_{max}$, we have :
    \begin{equation}
       \trec(\vep) < \texit(\vep)  \left( \frac{4\pi}{\vep}+1\right)^{d-1} 
    \end{equation}
    with $\texit(\vep) <  \frac{\pi \vep}{ \lambda_{max} - \lambda_{min}} $, so:
    \begin{equation}
        \trec(\vep) <  \frac{\pi \vep}{ \lambda_{max} - \lambda_{min}}   \left( \frac{4\pi}{\vep}+1\right)^{d-1}  
    \end{equation}
\end{theorem}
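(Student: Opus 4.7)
The plan is to mirror the two-step strategy of the pure-state \cref{th:recurrence-main-prl}: the abstract packing-number reduction of \cref{prop:t+N}, plus explicit bounds on both $\npack$ and $\texit$. I would take the metric space to be $\X_H := \overline{\{\U_t : t\in\R\}}$ equipped with the diamond norm. Because $\|\cdot\|_\diamond$ is invariant under composition with a unitary channel, $\|\U_s\W-\U_s\V\|_\diamond = \|\W-\V\|_\diamond$, so each $\U_t$ acts on $\X_H$ by an isometry and \cref{prop:t+N} yields
\begin{equation}
\trec(\vep) \leq \texit(\vep)\cdot \npack(\X_H,\vep)\ ,
\end{equation}
with the packing number taken in the diamond norm.

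To bound this packing number I would use $\|\cdot\|_\diamond \leq 2\,\d$ from \eqref{eq:equivDISTANCE}: any $\vep$-packing in $\|\cdot\|_\diamond$ is automatically $\vep/2$-separated in $\d$, so it suffices to upper bound $\ncov(\X_H,\vep/2)$ with respect to the operator-norm distance $\d$. Working in the eigenbasis of $H$, every element of $\X_H$ is represented up to a global phase by a point of the $d$-torus of diagonal phases $(e^{i\varphi_1},\ldots,e^{i\varphi_d})$, and quotienting by the global phase reduces this to a $(d-1)$-torus. Choosing $\alpha$ in \eqref{eq:distancesDEF} so as to align the first phase and using $|e^{ia}-e^{ib}|\leq|a-b|$, discretizing the remaining $d-1$ phase differences with spacing at most $\vep/2$ produces a covering of cardinality at most $\lceil 4\pi/\vep\rceil^{d-1}\leq (8\pi/\vep)^{d-1}$, in direct analogy with \cref{lm:packing-T}.

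For the inverse quantum speed limit I would use only the two extreme eigenvalues of $H$. For any global phase $\alpha$, the triangle inequality applied to $U_t=\sum_k e^{i\lambda_k t}\ketbra{k}$ implies
\begin{equation}
\|U_t - e^{i\alpha}I\|_\infty \geq \tfrac{1}{2}\bigl|e^{i\lambda_{max}t}-e^{i\lambda_{min}t}\bigr| = \bigl|\sin\bigl((\lambda_{max}-\lambda_{min})t/2\bigr)\bigr|\ ,
\end{equation}
and taking the minimum over $\alpha$ preserves this lower bound, so $\|\U_t-\I\|_\diamond\geq \d(\U_t,\I)$ yields the same estimate in the diamond norm. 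The elementary inequality $|\sin x|\geq 2x/\pi$ on $[0,\pi/2]$ then forces the right-hand side to reach $\vep$ by some $t\leq \pi\vep/(\lambda_{max}-\lambda_{min})$, which gives the claimed bound on $\texit(\vep)$; substituting into the first display finishes the proof. I do not anticipate a genuine obstacle: the one delicate point is bookkeeping the factors of two introduced by the global-phase quotient and by the $\d$-to-$\|\cdot\|_\diamond$ translation, which is exactly what turns the $4\pi/\vep$ of the pure-state theorem into $8\pi/\vep$ here.
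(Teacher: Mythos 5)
Your proposal is correct and follows essentially the same route as the paper: the abstract packing bound of \cref{prop:t+N}, the $(8\pi/\vep)^{d-1}$ covering of the $(d-1)$-torus of relative phases in the distance $\d$ combined with \eqref{eq:equivDISTANCE}, and an exit-time bound driven by the two extreme eigenvalues. The only (harmless) difference is cosmetic: you lower-bound $\d(\U_t,\I)$ at a single time via the triangle inequality $|e^{i\lambda_{max}t}-e^{i\lambda_{min}t}|\leq 2\max_k|e^{i\lambda_k t}-e^{i\alpha}|$, whereas the paper shifts the spectrum to $\lambda'_{max}=-\lambda'_{min}$ and identifies the first time the $\d$-distance exceeds $\vep$, both yielding $\texit(\vep)\leq \pi\vep/(\lambda_{max}-\lambda_{min})$.
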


As in the case of states, the proof of \cref{th:recurrence-unitary} proceeds by using \cref{prop:t+N}. Thus, we need upper bounds on exit time and the packing number. We first bound the exit time, which is considerably easier than for states.

\begin{lemma}\label{lm:unitary-exit-time}
    The exit time $\texit$ for unitary channels equipped with the diamond distance is bounded as:
    \begin{equation}
        \texit(\vep)\leq \frac{\pi \vep}{\lambda_{max} - \lambda_{min}}
    \end{equation}
\end{lemma}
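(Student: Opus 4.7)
My plan is to reduce the bound to controlling the simpler global-phase-minimized operator-norm distance $\d(\U_t,\U_0)$, and then exploit how fast the eigenvalues of $e^{iHt}$ spread on the unit circle. By the first inequality in \eqref{eq:equivDISTANCE}, $\d(\U_t,\U_0) \leq \|\U_t-\U_0\|_\diamond$, so it suffices to exhibit a time $t_0 \leq \pi\vep/(\lambda_{\max}-\lambda_{\min})$ at which $\d(\U_{t_0},\U_0) \geq \vep$. Diagonalizing $H$, we have $U_t=e^{iHt}$ and $U_0=I$, so that
\[
\d(\U_t,\U_0) \;=\; \min_{\varphi\in[0,2\pi)} \max_k \bigl|e^{i\lambda_k t}-e^{i\varphi}\bigr| \;=\; \min_{\varphi} \max_k 2\bigl|\sin\!\bigl(\tfrac{\lambda_k t-\varphi}{2}\bigr)\bigr|,
\]
using the identity $|e^{i\alpha}-e^{i\beta}|=2|\sin((\alpha-\beta)/2)|$.

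Next I would identify the optimal phase as the midpoint of the spectral arc, $\varphi^\ast = (\lambda_{\min}+\lambda_{\max})t/2$. Writing $\Delta := \lambda_{\max}-\lambda_{\min}$, the quantities $\lambda_k t-\varphi^\ast$ all lie in $[-\Delta t/2,\Delta t/2]$, with equality attained for the eigenvalues $\lambda_{\min}$ and $\lambda_{\max}$. Provided $\Delta t\leq 2\pi$, the sine is monotonically increasing on $[0,\Delta t/4]$, so the maximum is attained at these two extremal eigenvalues, giving $\d(\U_t,\U_0) = 2\sin(\Delta t/4)$. That no other $\varphi$ beats the midpoint follows from a one-line symmetry check: any shift $\varphi^\ast\to\varphi^\ast+\delta$ pushes the deviation on one of the two extremal eigenvalues from $\Delta t/2$ up to $\Delta t/2+|\delta|$, so the max strictly grows.

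Finally, applying the elementary bound $\sin(x)\geq 2x/\pi$ valid on $[0,\pi/2]$ (which applies in the regime $\Delta t\leq 2\pi$ relevant for $\vep\leq 2$) yields $\d(\U_t,\U_0)\geq \Delta t/\pi$. Setting $t_0 := \pi\vep/\Delta$ we obtain $\|\U_{t_0}-\U_0\|_\diamond \geq \d(\U_{t_0},\U_0)\geq \vep$, proving the lemma. I expect no serious obstacle: the only non-routine step is recognizing the optimal global phase, which the symmetry argument above handles immediately; everything else is elementary trigonometry together with the diamond/operator-norm equivalence \eqref{eq:equivDISTANCE}.
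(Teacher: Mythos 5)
Your proof follows essentially the same route as the paper's: reduce to the phase-minimized operator-norm distance $\d$ of \eqref{eq:distancesDEF} via \eqref{eq:equivDISTANCE}, center the reference phase at the midpoint of the spectral arc (equivalent to the paper's shift of $H$ so that $\lambda'_{\max}=-\lambda'_{\min}$), let the two extremal eigenvalues do the work, and finish with elementary trigonometry ($\sin x\geq 2x/\pi$, i.e.\ $\arcsin y\leq \pi y/2$); the only cosmetic difference is that you evaluate the distance at $t_0=\pi\vep/(\lambda_{\max}-\lambda_{\min})$ rather than solving for the first crossing time. One caveat: writing $\Delta=\lambda_{\max}-\lambda_{\min}$, your exact formula $\d(\U_t,\I)=2\sin(\Delta t/4)$ and the ``max strictly grows'' symmetry check are only valid for $\Delta t\leq\pi$, since for $\Delta t\in(\pi,2\pi]$ the minimizing phase can sit in the complementary gap (e.g.\ $d=2$ with near-antipodal eigenphases), and then the distance at $t_0$ can genuinely fall below $\vep$ even though it crossed $\vep$ earlier; so your argument as written proves the lemma for $\vep\leq 1$, not $\vep\leq 2$. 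This is a minor regime issue rather than a flaw in the method --- only a lower bound over the two extremal eigenvalues is needed, the lemma is applied at small $\vep$, and the paper's own proof carries an analogous implicit small-$\vep$ restriction.
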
    

\begin{proof}
Let us consider the Hamiltonian $H'$ equal to the original Hamiltonian $H$ shifted by a multiple of identity so that $\lambda'_{max} = -\lambda'_{min}$. Obviously $H$ and $H'$ generate the same dynamics. Recall the definition of the distance $\d$ from \eqref{eq:distancesDEF}. We have:
    \begin{equation}\label{eq:distance-unitary}
        \d(\U_t, \I) = \inf_{\varphi \in [0,2\pi)} \max_k \abs{e^{i\lambda'_k t} - e^{i\varphi}} =\inf_{\varphi \in [0,2\pi)} \max_k 2\abs{\sin\left( \frac{\lambda'_k t - \varphi}{2} \right)}
    \end{equation}
    
Let us define $\texit^{D} := \inf \{ t > 0 \ \vert \ \d(\U_t, \I) > \vep \}$.  
Obviously by \eqref{eq:equivDISTANCE} we have $\texit \leq \texit^{D}$. To upper bound the exit time $\texit$ it thus suffices to find the first time $t>0$ for which:
\begin{equation}
\inf_{\varphi \in [0,2\pi)} \max_k \abs{\sin\left( \frac{\lambda'_k t - \varphi}{2}  \right)} > \frac{\vep}{2}
\end{equation}

We claim that this time is equal to $t_0 = \frac{2x_0}{\lambda'_{max}}$, where $x_0$ is the smallest positive $x$ such that $\sin(x) = \frac{\vep}{2}$. Since $\lambda'_{max} = -\lambda'_{min}$, for any $\varphi$ at least one of $\abs{\sin\left( \frac{\lambda'_{max} t_0 - \varphi}{2}  \right)} = \abs{\sin\left( x_0 - \frac{\varphi}{2}  \right)}$ and $\abs{\sin\left( \frac{\lambda'_{min} t_0 - \varphi}{2}  \right)} = \abs{\sin\left( x_0 + \frac{\varphi}{2}  \right)}$ will be larger than $\sin(x_0)$. It is easy to see that $t_0$ is the smallest possible $t$ and that:
\begin{equation}
t_0 \leq   \frac{\pi}{2}\cdot\frac{\vep}{\lambda'_{max}}
\end{equation}
The lemma follows by shifting back to the original Hamiltonian $H$ so that $\lambda'_{max} = \frac{\lambda_{max} - \lambda_{min}}{2}$.
\end{proof}

We will now prove an upper bound on the packing number. Let $H=U \Lambda U^{\dagger}$ be the diagonalization of $H$ and let $\UU_U \subseteq \UU(d)$ denote the subset of unitary channels induced by unitary matrices of the form $UDU^{\dagger}$, where $D$ is diagonal. 

\begin{lemma}\label{lm:packing-UU}
    The packing numbers $\npack(\UU_U,\vep)$ with the diamond distance satisfies:
    \begin{equation}
        \npack(\UU_U,\vep) \leq \left( \frac{4\pi}{\vep} + 1 \right)^{d-1}
    \end{equation}
\end{lemma}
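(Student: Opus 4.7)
The plan is to mirror the argument used in \cref{lm:packing-T}, adapting each step to the unitary-channel setting where, compared with the state case, the norm is operator-based and the $|a_k|^2$ weights are absent. The first move is to reduce from the diamond norm to the auxiliary distance $\d$ of \eqref{eq:distancesDEF}, and from packing to covering. By \eqref{eq:equivDISTANCE}, any two channels that are $\vep$-separated in the diamond norm are at least $(\vep/2)$-separated in $\d$, so the diamond-norm $\vep$-packing of $\UU_U$ is bounded by the $\d$-packing of radius $\vep/2$. Combining with \eqref{eq:pack-cov} applied in the $\d$ metric, it suffices to construct a $(\vep/2)$-covering of $\UU_U$ with respect to $\d$ of cardinality at most $(8\pi/\vep)^{d-1}$.

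Next, I would parameterize $\UU_U$ concretely. Every element arises from a unitary of the form $U_\varphi := U\,\mathrm{diag}(e^{i\varphi_1},\ldots,e^{i\varphi_d})\,U^\dg$ with $\varphi\in[0,2\pi)^d$, and unitaries differing by a global phase define the same channel. Since $U$ is unitary and the operator norm is unitarily invariant,
\begin{equation*}
    \d(\U_\varphi,\U_{\varphi'}) \;=\; \inf_{\alpha\in[0,2\pi)}\,\max_{1\leq k\leq d}\,\bigl|e^{i\varphi_k}-e^{i(\alpha+\varphi'_k)}\bigr|.
\end{equation*}
Choosing $\alpha = \varphi_1 - \varphi'_1$ identifies the first phase, so only the $d-1$ differences $\varphi_k - \varphi_1$ (for $k\geq 2$) are relevant.

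The third step is to build an explicit grid covering. Let $n = \lceil 2\pi/\vep\rceil$ and let $N$ be the set of channels arising from unitaries $U_{\varphi'}$ with $\varphi'_1 = 0$ and $\varphi'_k \in \{2\pi j/n : j = 0,1,\ldots,n-1\}$ for each $k\geq 2$; then $|N| \leq n^{d-1}$. For an arbitrary $\U_\varphi \in \UU_U$, I would select, for each $k\geq 2$, the grid phase $\varphi'_k$ closest to $\varphi_k - \varphi_1$, so that $|(\varphi_k-\varphi_1) - \varphi'_k| \leq \pi/n \leq \vep/2$. Setting $\alpha = \varphi_1$ in the formula for $\d$ and using $|e^{ix}-e^{iy}|\leq |x-y|$ yields $\d(\U_\varphi,\U_{\varphi'})\leq \vep/2$, so $N$ is the required covering. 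Finally $n^{d-1}\leq (2\pi/\vep+1)^{d-1}\leq(4\pi/\vep)^{d-1}\leq(8\pi/\vep)^{d-1}$ (non-trivial only for $\vep\leq 2\pi$, which is the regime of interest), completing the proof.

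The only real care required here is the bookkeeping around the global-phase freedom, which is what reduces the effective dimensionality of the parameter space from $d$ to $d-1$. Compared with the state case, the absence of $|a_k|^2$ weights actually \emph{simplifies} the grid construction since all phase coordinates contribute uniformly, and I do not anticipate a substantive obstacle beyond this.
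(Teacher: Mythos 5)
Your argument is correct and follows essentially the same route as the paper's proof: reduce the diamond-norm packing to a covering in the auxiliary metric $\d$ via \eqref{eq:equivDISTANCE} and \eqref{eq:pack-cov}, use the global-phase freedom to fix one eigenphase and work on the remaining $d-1$ phase coordinates, and cover them by an explicit grid using $\abs{e^{ix}-e^{iy}}\leq\abs{x-y}$. The only differences are cosmetic: you pass through the $\d$-packing rather than the diamond-covering, and your coarser grid with the half-spacing nearest-point argument even yields the slightly better constant $(4\pi/\vep)^{d-1}$ before relaxing to the stated $(8\pi/\vep)^{d-1}$.
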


\begin{proof}

    By \eqref{eq:pack-cov} it suffices to upper bound the covering number  $\ncov(\UU_U,\vep)$. Recall the definition of distance $\d$ from \eqref{eq:distancesDEF}. By inequalities \eqref{eq:equivDISTANCE}, any $\vep/2$-covering of $\UU_U$ with respect to $\d$ is an $\vep$-covering of $\UU_U$ with respect to the diamond distance, so it suffices to bound the covering number with respect to $\d$. 
    
    Let $Y_U$ be the set of unitary matrices of the form $UDU^{\dagger}$ with $D$ diagonal of the form $D=(1,e^{i\varphi_2},\dots, e^{i\varphi_d})$, equipped with the operator norm $\norm{\cdot}_{\infty}$.  Let $P: Y_U\to\UU_U$ be the quotient map that takes a unitary matrix to the corresponding unitary channel. We easily check that $P$ is distance-nonincreasing and surjective. Again by \eqref{eq:ncov-P} it suffices to bound the covering number of $Y_U$.
    
   Let $n=\lceil \frac{4\pi}{\vep}\rceil$ and let $N$ be the set of matrices from $Y_U$ with eigenvalue vectors the form $x=(1, x_{i_{2},2}, \dots, x_{i_{d},d})$, where each $i_k=2,\dots, n$ and $x_{j,k} = e^{i \frac{2\pi j}{n}}$. We claim that the set $N$ determines a $\vep/2$-covering of $Y_U$. To this end, for any $V,V' \in Y_U$ we have: 
    \begin{equation}
            \norm{V - V'}_{\infty} = \max_{k\geq 2}\abs{e^{i\varphi'_k} - e^{i\varphi_k}} \leq \max_{k\geq 2}\abs{\varphi'_k - \varphi_k}
    \end{equation}
    By construction of the set $N$ for every $k\geq 2$ the point $\varphi_k$ is $\vep/2$-close on the unit circle to $\varphi'_k = x_{i_k,k}$ for some $i_k$ and this determines the desired channel $V' \in N$ that is $\vep/2$-close to $V$. We thus obtain:
    \begin{equation}
        \ncov(\UU_U, \vep/2) \leq n^{d-1} =  \left\lceil \frac{4\pi}{\vep}\right\rceil ^{d-1} \leq \left( \frac{4\pi}{\vep}+1 \right)^{d-1}
    \end{equation}
    which finishes the proof.
\end{proof}

\section{Upper bound on recurrence for non-interactive dynamics}

In this part we prove inequality Eq. \eqref{eq:free_recurrence} for arbirtary state $\psi_t$ evolving (from an initial state $\psi_0$)  in $(\mathbb{C}^d)^{\otimes n}$ under a free evolution generated by noninteracting Hamiltonian $H_n$ from Eq.\eqref{eq:nonInteractingHamiltonian}. We first note that:
\begin{equation}
    \exp(-it H_n)=  \overbrace{\exp(-it H)\otimes\ldots\otimes \exp(-it H)}^{n \text{ times}} 
\end{equation}
and hence the evolution operator is a simple tensor product $U_t^{\otimes n}$. If we follow the sketch of the proof of \cref{th:recurrence-main-prl} given in the main text and want to apply the abstract \cref{prop:t+N} we see that:
\begin{equation}\label{eq:npack}
    \trec(\vep)\leq \texit(\vep)\cdot \npack(\mathbb{T}_{\psi_0}^{free},\vep) ,
\end{equation}
where 
\begin{equation}
    \mathbb{T}_{\psi_0}^{free} := \{ \ketbra{\psi} \ \ketbra{\psi} = \mathbf{V}^{\otimes n}[\ketbra{\psi_0}],\ \mathbf{V}\in \mathcal{U}_U \}\ ,
\end{equation}
with $\mathcal{U}_U$ defined above  \cref{lm:packing-UU}, and the packing number in \eqref{eq:npack} is defined with respect to the metric inherited form the trace distance $\dtr$ in the set of pure states. By using $\npack(\mathbb{T}_{\psi_0}^{free},\vep) \leq \ncov(\mathbb{T}_{\psi_0}^{free},\vep)$ it suffices to find an upper bound on the covering number of $\mathbb{T}_{\psi_0}^{free}$. We bound the covering number $\ncov(\mathbb{T}_{\psi_0}^{free},\vep)$ in terms of by a covering number of $\ncov(\mathcal{U}_U,2\vep/n)$.   To this end we note that the map $f(\mathbf{U})\coloneq\mathbf{U}^{\otimes n}[\psi_0]$ satisfies:
\begin{equation}
   \dtr(f(\mathbf{U}),f(\mathbf{V}) ) \leq \frac{n}{2}\|\U-\V\|_\diamond ,
\end{equation}
which is implied by inequalities:
\begin{equation}
   \dtr( \mathbf{V}^{\otimes n}[\ketbra{\psi_0}], \mathbf{V}^{\otimes n}[\ketbra{\psi_0}]) \leq \frac{1}{2}\|\U^{\otimes n}-\V^{\otimes n}\|_\diamond \leq \frac{n}{2}\|\U-\V\|_\diamond\ ,
\end{equation}
which follow from the variational characterization of the diamond norm and its additivity \cite{NielsenChuang2010}. Therefore, the image of any $2\vep/n$-covering set of $\mathcal{U}_U$ under the map  $f$ is an $\vep$ covering set of $\mathbb{T}_{\psi_0}^{free}$ and henceforth $\ncov(\mathbb{T}_{\psi_0}^{free},\vep)\leq \ncov(\mathcal{U}_U,2\vep/n)$. We conclude the proof of \eqref{eq:free_recurrence} by using $\ncov(\mathcal{U}_U,\vep/n) \leq \left(\frac{4\pi n}{\vep}\right)^{d-1}$ (cf. \cref{lm:packing-UU}).

\section{Lower bound for random Hamiltonians}

\subsection{Lower bound for uniform superpositions}

We will now prove \cref{th:lower-bound-random-prl}. It says that the bound from \cref{th:recurrence-main-prl} is tight in the sense that for generic Hamiltonians the recurrence time is lower bounded by $\sim \texit(\vep)\left(\frac{1}{\vep}\right)^{d-1}$. Therefore it is not possible to have a general upper bound with a smaller functional dependence on $d$ and $\vep$. For concreteness we consider random Hamiltonians $H$ with eigenvalues $\lambda_k, k = 1,\dots, d$ drawn independently uniformly from the interval $[-1,1]$. We take the initial state $\ket{\psi_0}$ to have equal coefficients in all eigenstates of $H$:
\begin{equation}\label{eq:psi0-generic}
        \ket{\psi_0} = \sum_{k=1}^{d}\frac{1}{\sqrt{d}}\ket{k}
\end{equation}
where $\ket{k}$ are the eigenstates of $H$. Everywhere below probabilities and expectations are taken with respect to the choice of $H$. We will prove the following formal version of \cref{th:lower-bound-random-prl}:

\begin{theorem}\label{th:lower-bound-random}
    Let $H$ be a random Hamiltonian with eigenvalues $\lambda_k, k = 1,\dots, d$ as above. Let $\ket{\psi_0}$ be given by \eqref{eq:psi0-generic}. Assume $\vep <\frac{1}{300}$. Then with probability at least $1 - (3/4)^d - 5\exp(-\frac{d}{400})$ the recurrence time for the evolution started at $\ket{\psi_0}$ is at least:
    \begin{equation}
  \trec(\vep) > \left(  \frac{1}{300}\cdot\frac{1}{\vep} \right)^{d-2}
    \end{equation}
    With the same probability the exit time satisfies $\texit(\vep) < 6\vep$, so overall the bound can be stated as:
    \begin{equation}
  \trec(\vep) > \frac{1}{6}\cdot \texit(\vep)\left(  \frac{1}{300}\cdot\frac{1}{\vep} \right)^{d-1} 
    \end{equation}
\end{theorem}

We now outline the proof strategy. To speak of recurrence we need to consider times larger than the exit time $\texit$. We bound it using \cref{th:exit-time-bound-prl}, which requires us to compute the high probability behavior of $\avg{\Delta H^2}, \avg{H^2}$ and $\avg{H^4}$. The time $\texit$ is typically of order $\sim \vep$. Next, we use two different arguments to argue that for any particular time $t > \texit$ the probability that $\dtr(\psi_0, \psi_t) < \vep$ is small. If $t > T$ for some sufficiently large constant $T$, this is done in \cref{th:geometric-bound} using geometric considerations and explicit form of the eigenvalue distribution. For $t$ between $\texit$ and $T$, we argue instead in \cref{prop:200} that the distance from $\psi_0$ is monotonically increasing in this interval, so in particular larger than $\vep$. Finally we discretize time to perform a union bound over finitely many possible times and finish the proof.

\begin{proposition}\label{prop:texit-bound}
    With probability at least $1-5\exp(-\frac{d}{400})$ the following hold simultaneously for $A_1=1/2, A_2=1/9, A_3 = A_1^{-1/2} = \sqrt{2}, A_4= A_1^{-1/2}\cdot (1 - \frac{\vep_0}{A_2})^{-1/2}=2\sqrt{5}, \vep_0 = 1/10$:
    \begin{equation}\label{eq:variance32}
        \avg{\Delta H^2} \leq \avg{H^2} < A_1
    \end{equation}
        
    \begin{equation}\label{eq:114}
            \frac{\avg{\Delta H^2}}{\avg{\Delta H^2} + \sqrt{\avg{(H-\avg{H})^4}}} > A_2
    \end{equation}
    
    \begin{equation}
        \texit(\vep)> A_3 \vep  
    \end{equation}
    If also $\vep < \vep_0$, then simultaneously:
    \begin{equation}\label{eq:texit-upper-bound}
        \texit(\vep) < A_4 \vep
    \end{equation}
\end{proposition}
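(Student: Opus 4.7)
The plan is to derive all four claims of the proposition from simultaneous Hoeffding-type concentration of the empirical moments of $H$ on $\ket{\psi_0}$. Since the $\lambda_k$ are i.i.d.\ uniform on $[-1,1]$ and $\avg{H^n} = \frac{1}{d}\sum_{k=1}^d \lambda_k^n$, each $\avg{H^n}$ is the empirical mean of $d$ i.i.d.\ random variables bounded by $1$. Applying Hoeffding's inequality separately for $n\in\{1,2,3,4\}$ with a fixed numerical tolerance, and taking a union bound, produces an event $\mathcal{E}$ of probability at least $1-5\exp(-d/400)$ on which $\avg{H},\avg{H^2},\avg{H^3},\avg{H^4}$ all lie within prescribed neighborhoods of their expectations $0,\ 1/3,\ 0,\ 1/5$. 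The constant $1/400$ in the exponent is fixed by choosing the tolerances tight enough that each of the inequalities \eqref{eq:variance32}--\eqref{eq:texit-upper-bound} has a comfortable margin.

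Working on $\mathcal{E}$, inequality \eqref{eq:variance32} is immediate from $\varsecond \leq \avg{H^2} \approx 1/3 < 1/2 = A_1$. For \eqref{eq:114}, I expand the central fourth moment $\varfourth = \avg{H^4} - 4\avg{H}\avg{H^3} + 6\avg{H^2}\avg{H}^2 - 3\avg{H}^4$ and note that on $\mathcal{E}$ the cross terms involving $\avg{H}$ and $\avg{H^3}$ are negligible, so $\sqrt{\varfourth}$ is close to $1/\sqrt{5}$. Combined with the two-sided control on $\varsecond$ that the same event also supplies, the ratio $\vep_{\ast} = \varsecond/(\varsecond + \sqrt{\varfourth})$ sits close to $(1/3)/(1/3 + 1/\sqrt{5})\approx 0.43$, which exceeds $A_2 = 1/9$ with room to spare.

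The lower bound $\texit(\vep) > A_3\,\vep$ then drops out of the Mandelstam--Tamm quantum speed limit cited in the main text: $\texit(\vep) \geq \arcsin(\vep)/\sqrt{\varsecond} \geq \vep/\sqrt{\varsecond} > \vep/\sqrt{A_1} = A_3\,\vep$, using \eqref{eq:variance32}. For the upper bound $\texit(\vep) < A_4\,\vep$ under the extra hypothesis $\vep<\vep_0$, I invoke Theorem~\ref{th:exit-time-bound-prl}: the chain $\vep < \vep_0 < A_2 < \vep_{\ast}$ (where the last inequality is \eqref{eq:114}) yields $\texit(\vep) \leq (\vep/\sqrt{\varsecond})(1-\vep/\vep_{\ast})^{-1/2} \leq (\vep/\sqrt{\varsecond})(1-\vep_0/A_2)^{-1/2}$, and combining with the matching lower bound on $\varsecond$ supplied by $\mathcal{E}$ produces the claim.

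The main obstacle I expect is constant book-keeping: the factor $A_1^{-1/2}$ appears in both $A_3$ and $A_4$, so a single concentration event must control $\varsecond$ from both sides with matching constants, and the Hoeffding tolerances have to be calibrated uniformly so that the same exponential rate $\exp(-d/400)$ dominates all five error events simultaneously. None of this is conceptually hard, but it does force a careful simultaneous choice of deviation margins at the start of the argument; everything else is direct application of results already proved.
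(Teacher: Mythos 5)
Your overall route coincides with the paper's: Hoeffding concentration of the empirical moments $\avg{H^k}$, $k=1,\dots,4$, a union bound at rate $\exp(-d/400)$, then \eqref{eq:variance32} and \eqref{eq:114} from the resulting moment control, the Mandelstam--Tamm bound (with $\arcsin\vep\ge\vep$ and $\varsecond<A_1$) for $\texit>A_3\vep$, and \cref{th:exit-time-bound-prl} for the upper bound. The first three claims go through essentially as you describe, with one small caveat: ``the cross terms are negligible'' should be replaced by an explicit crude estimate (e.g.\ $|\avg{H^3}|\le 1$, $|\avg{H}|\le 1/10$, so $4|\avg{H}\avg{H^3}|\le 2/5$), which is not negligible but still leaves $\varfourth$ small enough that the ratio in \eqref{eq:114} clears $1/9$ comfortably.

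The genuine gap is in your deduction of \eqref{eq:texit-upper-bound}. You weaken $(1-\vep/\vep_{\ast})^{-1/2}$ to $(1-\vep_0/A_2)^{-1/2}=\sqrt{10}$ and then want to finish by ``combining with the matching lower bound on $\varsecond$''. To reach $A_4=A_1^{-1/2}(1-\vep_0/A_2)^{-1/2}$ along this chain you would need $\varsecond\ge A_1=1/2$, which contradicts \eqref{eq:variance32} and the fact that $\varsecond$ concentrates near $1/3$; since the best lower bound concentration can supply is $\varsecond\lesssim 1/3$, your chain can never do better than roughly $\sqrt{30}\,\vep>2\sqrt{5}\,\vep$, so no calibration of Hoeffding tolerances rescues it --- the obstruction you flag as ``book-keeping'' (two-sided control of $\varsecond$ at the single constant $A_1$) is actually impossible. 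The repair is to not throw away your own estimate $\vep_{\ast}\approx 0.43$: on the event one has, say, $\varsecond>1/5$ and $\vep_{\ast}>1/5$, whence for $\vep<\vep_0=1/10$, $\texit\le \frac{\vep}{\sqrt{1/5}}\bigl(1-\tfrac{1/10}{1/5}\bigr)^{-1/2}=\sqrt{10}\,\vep<2\sqrt{5}\,\vep$, which is \eqref{eq:texit-upper-bound}. (For what it is worth, the paper's displayed chain contains the same inversion, bounding $\vep/\sqrt{\varsecond}$ by $\vep/A_1^{1/2}$; but as written your argument for the last claim does not close, while the statement itself is recoverable as above.)
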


\begin{proof}
    Since $\avg{H^k} = \frac{1}{d}\sum_{i=1}^{d}\lambda_i^k$ and each $\lambda_i$ is independent and uniform over $[-1,1]$, clearly we have $\E\avg{H} = 0,\E\avg{H^2}= 1/3, \E\avg{H^3}=1/4, \E\avg{H^4} = 1/5$. We bound deviations from these expectations via Hoeffding's inequality.
    \begin{proposition}[Hoeffding's inequality]
Let $X_k$ be independent random variables such that almost surely $a_k \leq X_k \leq b_k$. Let $S_n = X_1 + \dots + X_n$. Then for all $\delta>0$:
\begin{align}\label{eq:Hoeffding}
\Pp\left( S_n - \E S_n \geq \delta \right) \leq \exp\left( -\frac{\delta^2}{\sum_{k=1}^{n}(b_k-a_k)^2} \right)
\end{align}
\end{proposition}
Setting $X_i = \frac{1}{d}\lambda_i^k$ and $\delta=1/10$ we obtain that with probability at least $1-5\exp(-\frac{d}{400})$ simultaneously $\avg{H} < 1/10, 1/5 <\avg{H^2} < 1/2, 0<\avg{H^3}, \avg{H^4} < 1/10$, which also implies $1/10 < \avg{\Delta H^2} < 1/2$. A quick computation shows that these inequalities imply $\avg{(H-\avg{H})^4} < \frac{13}{100}$, so that $\sqrt{\avg{(H-\avg{H})^4}} < 2/5$. Altogether this easily implies \eqref{eq:variance32} and \eqref{eq:114}.

Assuming the same event as above, the quantum speed limit \cite{MandelstamTamm1945} implies:
\begin{equation}
    \texit > \frac{\vep}{\sqrt{\avg{\Delta H^2}}} > \frac{\vep}{A_1^{1/2}} = \sqrt{2}\vep 
    \end{equation}

On the other hand, by \eqref{eq:114}, if $\vep < 1/10$, the conditions of \cref{th:exit-time-bound-prl} are satisfied and one easily calculates that:
    \begin{align}
        &\texit \leq \frac{\vep}{\left( \avg{\Delta H^2} - \left( \avg{\Delta H^2} + \sqrt{\avg{(H-\avg{H})^4}} \right) \vep \right)^{1/2}} = \\
        &\frac{\vep}{\sqrt{\avg{\Delta H^2}}} \cdot \left( 1 - \frac{\avg{\Delta H^2}+\sqrt{\avg{(H-\avg{H})^4}} }{\avg{\Delta H^2}}\vep \right)^{-1/2} < \\
        &\frac{\vep}{A_1^{1/2}} \cdot \left( 1 - \frac{\vep_0}{A_2} \right)^{-1/2} <
        2\sqrt{5} \vep
    \end{align}
\end{proof}

\begin{proposition}\label{prop:200}
    Assume $\vep <  A_2 = \frac{1}{9}$ and let $T = A_2A_3 = \frac{\sqrt{2}}{9}$. Then with probability at least $1-5\exp(-\frac{d}{400})$ we have $\texit < T$ and for every $t \in [\texit, T]$ we have $\dtr(\psi_0, \psi_t) > \vep $.
\end{proposition}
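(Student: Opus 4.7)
I would work throughout on the event $\mathcal{E}$ from \cref{prop:texit-bound}, on which the four bounds \eqref{eq:variance32}--\eqref{eq:texit-upper-bound} simultaneously hold; this event has probability at least $1-5\exp(-d/400)$. The argument splits into two pieces: (i)~the upper bound $\texit(\vep) < T$, and (ii)~the claim that $\dtr(\psi_0,\psi_t)>\vep$ throughout $[\texit,T]$.

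For (i), I would deduce $\texit(\vep)<T$ directly from \cref{th:exit-time-bound-prl}, using the Hoeffding-controlled values of $A=\varsecond$ and $\vep_*=A/B$ on $\mathcal{E}$ (with $B=\varsecond+\sqrt{\varfourth}$). Concretely, $\texit(\vep)\leq \vep/\sqrt{A-B\vep}$, and one needs to check that on $\mathcal{E}$ this stays below $T=A_2 A_3=\sqrt{2}/9$. I expect this to be the main numerical obstacle, because the bound $\vep/\sqrt{A-B\vep}$ degrades as $\vep\to A_2$, which is precisely the boundary of the assumed range $\vep<A_2$; the Hoeffding constants from \cref{prop:texit-bound} must be tracked carefully to close the resulting gap.

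Piece (ii) is the structural heart of the proof, and I would reduce it to strict monotonicity of $t\mapsto \dtr(\psi_0,\psi_t)$ on $[0,T]$; once this is established, $\texit<T$ together with $\dtr(\psi_0,\psi_{\texit})=\vep$ yields the desired inequality at once. The key observation reuses the $F''$ calculation from the proof of \cref{th:exit-time-bound-prl}: for any $t$ with $\dtr(\psi_0,\psi_t)<\vep_*$,
\begin{equation*}
    F''(t) \;=\; -2A + \tr\!\big(X_H(\psi_t-\psi_0)\big) \;\leq\; -2A + 2B\,\dtr(\psi_0,\psi_t) \;<\; -2A + 2B\vep_* \;=\; 0,
\end{equation*}
by \cref{lm:norm-XH-prl}. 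Combined with $F'(0)=0$, strict concavity forces $F'(t)<0$, so $F$ strictly decreases and $\dtr(\psi_0,\psi_t)=\sqrt{1-F(t)}$ strictly increases on any interval where $\dtr(\psi_0,\psi_t)<\vep_*$.

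To make this uniform on $[0,T]$, I would set $\tau:=\inf\{t>0 \,:\, \dtr(\psi_0,\psi_t)=A_2\}$ and observe that $\tau=\texit(A_2)\geq \arcsin(A_2)/\sqrt{A}>A_2/\sqrt{A_1}=A_2 A_3=T$, using the Mandelstam--Tamm inequality, the strict bound $\arcsin(x)>x$ for $x>0$, and $A\leq A_1$ from \eqref{eq:variance32}. Consequently $\dtr(\psi_0,\psi_t)<A_2\leq \vep_*$ throughout $[0,T]\subset [0,\tau)$, the concavity argument applies, and strict monotonicity follows, completing piece~(ii).
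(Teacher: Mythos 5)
Your piece (ii) is essentially the paper's own argument: the paper likewise deduces monotonicity of $t\mapsto\dtr(\psi_0,\psi_t)$ from the negativity of $F'$ (via \eqref{eq:derivative}, i.e.\ the same estimate $F''\leq -2A+2B\,\dtr$ that you rederive pointwise), valid on the event of \cref{prop:texit-bound} where $\vep_*>A_2$, and then uses the speed-limit lower bound $\texit>A_3\vep$ to push the monotonicity window past $T=A_2A_3$. The only cosmetic difference is that you stop at the hitting time of level $A_2$ while the paper phrases it through $\texit(\vep_*)$; your version is, if anything, stated more carefully.

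The gap is your piece (i), which you flag but do not close --- and it cannot be closed along the route you propose. On the event of \cref{prop:texit-bound} one only knows $A>\tfrac{1}{10}$ and $B<\tfrac{9}{10}$, so for $\vep$ near $A_2=\tfrac19$ the bound of \cref{th:exit-time-bound-prl} gives only $\texit\leq\vep/\sqrt{A-B\vep}$, which can be of order $\tfrac12$, far above $T=\sqrt{2}/9\approx 0.157$. No amount of tracking the Hoeffding constants repairs this: for typical $H$ in this ensemble one has $\varsecond\approx\tfrac13$, hence $\texit(\vep)\geq\arcsin(\vep)/\sqrt{\varsecond}\approx\sqrt{3}\,\vep$, which already exceeds $T$ once $\vep\gtrsim 0.09$, so the asserted inequality $\texit<T$ is itself only valid for $\vep$ well below $A_2$, not on the whole range $\vep<A_2$. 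You are in good company here: the paper's own proof never actually establishes $\texit<T$ either (it proves only the distance statement on $[\texit,\texit(\vep_*)]\supseteq[\texit,T]$), and in the sole place the proposition is invoked (\cref{th:lower-bound-random}) one has $\vep<10^{-4}$, where $\texit<A_4\vep\ll T$ follows immediately from \eqref{eq:texit-upper-bound}. So the honest repair is not to tune constants but to restrict $\vep$ (e.g.\ demand $\vep<\vep_0$ with $A_4\vep<T$), which is exactly what every downstream use satisfies; with that restriction your plan, combined with your piece (ii), gives a complete proof.
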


\begin{proof}
By \cref{prop:texit-bound} with probability at least $1-5\exp(-\frac{d}{400})$ we have $\vep_{\ast} > A_2$, where $\vep_{\ast}$ is given as in \cref{th:exit-time-bound-prl}. This follows from the proof of Theorem \ref{th:exit-time-bound-prl}, since by \eqref{eq:derivative} the first derivative of fidelity is always negative for $t < \texit(\vep_{\ast})$ and therefore the distance is monotoneously increasing. Since at time $\texit(\vep)$ the distance is $\vep$ and $\vep < A_2 < \vep_{\ast}$, in the interval $[\texit(\vep), \texit(\vep_{\ast})]$ the distance is $> \vep$. Again by \cref{prop:texit-bound} we have $\texit(\vep_{\ast)} > A_3\vep_{\ast} > A_2 A_3 = \frac{\sqrt{2}}{9}$, which finishes the proof.
\end{proof}

We now prove the main probabilistic bound, which states that for sufficiently large times the state $\psi_t$ is unlikely to be close to the initial state $\psi_0$. Since the trace distance can be expressed using the sum $\frac{1}{d}\sum_k e^{i\lambda_k t}$, which is a sum of independent random variables, it would be natural to again use Hoeffding's inequality. However, this approach can only prove a bound of the form $\sim c^d$ for some constant $c<1$ instead of $\sim \vep^d$ which is needed to get the $\sim(\frac{1}{\vep})^d$ dependence of the recurrence time for arbitrary $\vep$. We thus use a more geometric approach which uses directly the distribution of eigenvalues.

\begin{proposition}\label{th:geometric-bound}
 Let $T = A_2A_3=\frac{\sqrt{2}}{9}$. Then for any $t>T$ we have:
    \begin{equation}\label{eq:bound2}
        \Pr\left( \dtr(\psi_0, \psi_t) < \vep \right) \leq  (80\vep)^{d-1}
    \end{equation}
\end{proposition}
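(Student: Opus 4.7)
The key observation is that $\dtr(\psi_0,\psi_t)^2=1-|S(t)|^2$, where $S(t):=\tfrac{1}{d}\sum_{k=1}^d e^{-i\lambda_k t}$, so the event $\dtr(\psi_0,\psi_t)<\vep$ is equivalent to $|S(t)|>\sqrt{1-\vep^2}$, which in turn is equivalent to the existence of $\phi\in[0,2\pi)$ with $\Re(e^{-i\phi}S(t))=|S(t)|\geq\sqrt{1-\vep^2}\geq 1-\vep^2$. The first step is to eliminate the existential quantifier on $\phi$ via discretization: take an $\eta$-net $\{\phi_j\}$ of $[0,2\pi)$ with $\eta\asymp\vep^2$, so that whenever the event holds, some $\phi_j$ satisfies $\Re(e^{-i\phi_j}S(t))\geq 1-c_1\vep^2$ for a constant $c_1$ (using $|\Re(e^{-i\phi}S(t))-\Re(e^{-i\phi_j}S(t))|\leq |\phi-\phi_j|\cdot|S(t)|\leq\eta$). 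Since the net has size $\sim 2\pi/\vep^2$, a union bound will produce the $\vep^{-2}$ prefactor appearing in the target bound.

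For each fixed $\phi_j$ the remaining task is to bound $\Pr(\sum_k Y_k< c_1 d\vep^2)$ with $Y_k:=1-\cos(\lambda_k t-\phi_j)\geq 0$ i.i.d. I will apply a Chernoff bound: for every $\alpha>0$,
\[
\Pr\!\left(\sum_k Y_k< c_1 d\vep^2\right)\leq e^{c_1\alpha d\vep^2}\,\Ex[e^{-\alpha Y_1}]^d,
\]
which reduces the problem to controlling $\Ex[e^{-\alpha Y_1}]$ uniformly in $\phi$ and $t\geq T$.

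Writing $\Ex[e^{-\alpha Y_1}]=\frac{e^{-\alpha}}{2t}\int_{-t-\phi}^{t-\phi}e^{\alpha\cos u}\,du$ and using $2\pi$-periodicity of the integrand, the integral is bounded by $\lceil t/\pi\rceil\cdot 2\pi I_0(\alpha)\leq (1+\pi/t)\cdot 2\pi I_0(\alpha)$, where $I_0$ is the modified Bessel function. Combining this with the standard Laplace estimate $I_0(\alpha)\leq\tfrac{e^{\alpha}}{2}\sqrt{\pi/(2\alpha)}$ (which follows from $\cos\theta\leq 1-2\theta^2/\pi^2$ for $|\theta|\leq\pi$) yields $\Ex[e^{-\alpha Y_1}]\leq (1+\pi/T)\cdot\tfrac{1}{2}\sqrt{\pi/(2\alpha)}$ uniformly for $t\geq T$. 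Optimizing $\alpha=(2c_1\vep^2)^{-1}$ gives $\Pr(\sum_k Y_k< c_1 d\vep^2)\leq (C_0\vep)^d$ with $C_0$ of the form $\mathrm{const}\cdot(1+\pi/T)$, and combining with the discretization yields the announced bound $\frac{8\pi}{\vep^2}(C\vep)^d$ after absorbing constants into $C$.

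The main obstacle is uniform control of the MGF in the regime $T\leq t<\pi$, where $\lambda_k t\bmod 2\pi$ is not uniformly distributed on the circle but concentrated on a short arc: the naive equidistribution estimate $\Ex[e^{-\alpha Y_1}]\approx e^{-\alpha}I_0(\alpha)$ (valid for $t\to\infty$) degrades, and must be corrected by the $(1+\pi/t)$ factor above, which is exactly what drives the $(1+8/T)$-type dependence of the constant $C$. An equivalent, more geometric route is available: use $1-\cos x\geq 2x^2/\pi^2$ to convert $|S(t)|>\sqrt{1-\vep^2}$ into the Euclidean condition that $t\vec\lambda$ lies in the radius-$r$ tubular neighborhood of the diagonal lattice $L=2\pi\Z^d+\R\vec 1$ with $r=\pi\vep\sqrt{d/2}$, and then bound the volume of this neighborhood inside $[-t,t]^d$ by a lattice-point count against the $(d-1)$-dimensional projection of $\mathbb Z^d$ to $\vec 1^{\perp}$; this path recovers the same scaling $(C\vep)^d/\vep^2$.
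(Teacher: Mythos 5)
Your proposal is correct, but it proves the proposition by a genuinely different route than the paper. You reduce the event $\dtr(\psi_0,\psi_t)<\vep$ to the lower-tail event $\sum_k\bigl(1-\cos(\lambda_k t-\phi_j)\bigr)<c_1 d\vep^2$ over an $O(\vep^2)$-net in $\phi$ and then apply a Chernoff/exponential-moment bound, exploiting that $\Ex\,e^{-\alpha(1-\cos(\lambda t-\phi))}\lesssim(1+\pi/t)\,\alpha^{-1/2}$ by periodicity plus the Laplace estimate for $I_0$; choosing $\alpha\asymp\vep^{-2}$ makes each of the $d$ independent factors contribute $O\bigl((1+\pi/T)\vep\bigr)$, which recovers the crucial $\vep^{d}$ scaling and in fact yields a constant ($\approx 2.1(1+\pi/T)$) smaller than the stated $C=50(1+8/T)$, so the claimed bound follows after the union bound over the $\lesssim 2\pi/\vep^2$ net points. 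The paper instead works geometrically: it converts the condition into the statement that $\lambda t+\bar\varphi$ lies within distance $\sim\vep\sqrt{d}$ of the lattice $(2\pi\Z)^d$, counts the lattice points near the cube $[-t,t]^d$ by a volume comparison, and bounds the probability per lattice point by the ratio $\vol(B(z,4\vep\sqrt d))/(2t)^d$, with the same $\varphi$-net union bound; this is essentially the ``more geometric route'' you sketch at the end. The comparison is instructive: the paper's remark that a Hoeffding-type argument only gives $c^d$ with fixed $c<1$ applies to the bounded-difference version, whereas your tilted-MGF argument shows that an exponential-moment method does capture the $\vep^d$ behaviour, because the relevant small-ball probability $\Pr(1-\cos<y)\sim\sqrt{y}$ is encoded in the $\alpha^{-1/2}$ decay of the MGF; the paper's lattice argument, on the other hand, generalizes more transparently to the non-uniform amplitudes and the $d-1$ dimensional reduction used later (Proposition on states with small effective dimension). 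Your write-up is a plan rather than a full proof, but every step (the exact identity $\dtr^2=1-|S(t)|^2$ for this state, the Lipschitz-in-$\phi$ discretization, the MGF bound uniform in $t\geq T$ and $\phi$, and the optimization $\alpha=(2c_1\vep^2)^{-1}$) checks out, and the constants close with room to spare.
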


\begin{proof}
   Let us work with the Bures distance $\dbures$. By \eqref{eq:dtr-hs} if $\dtr(\psi_0, \psi_t) < \vep$, then $\dbures(\psi_0,\psi_t) < \sqrt{2}\vep$. On the other hand:
    \begin{equation}\label{eq:bures-bound}
        2\vep^2 > \dbures(\psi_0,\psi_t)^2 = 2(1 - \abs{\langle \psi_0 \vert \psi_t \rangle}) = 
        2\left(1 - \abs{\frac{1}{d}\sum_{k=1}^{d}e^{i\lambda_k t}}\right)
    \end{equation}
    Let $\alpha$ be the phase of $\sum_{k}e^{i\lambda_k t}$. We then have by \eqref{eq:bures-bound}:
    \begin{equation}
        \sum_{k=1}^{d}\abs{e^{i\lambda_k t} - e^{i\alpha} }^2 = 2d - 2\mathrm{Re}\left(e^{-i\alpha} \sum_k e^{i\lambda_k t} \right) = 
        2d - 2\abs{\sum_k e^{i\lambda_k t}} < 2d \vep^2
    \end{equation}
    Therefore there exists an index $j$ such that:
    \begin{equation}\label{eq:indexj}
    \abs{e^{i\lambda_j t}-e^{i\alpha}}< \sqrt{2}\vep < 2\vep
    \end{equation}
     We will later perform a union bound over $j$. For $j\neq k$, by using $\abs{e^{i\lambda_k t} - e^{i\lambda_j t}} \leq \abs{e^{i\lambda_k t} - e^{i\alpha}} + \abs{e^{i\alpha} - e^{i\lambda_j t}}$ we easily arrive at:
     \begin{equation}
         \sum_{k \neq j}\abs{e^{i\lambda_k t} - e^{i\lambda_j t}}^2 \leq  2\sum_{k\neq j}\abs{e^{i\lambda_k t} - e^{i\alpha} }^2 + 2(d-1)\abs{e^{i\alpha} - e^{i\lambda_j t}}^2
     \end{equation}
     and the previous bounds then imply that:
     \begin{equation}
         \sum_{k \neq j}\abs{e^{i\lambda_k t} - e^{i\lambda_j t}}^2 < 8d\vep^2 + 8(d-1)\vep^2 < 16d\vep^2
     \end{equation}
     which is equivalent to:
     \begin{equation}
         \sum_{k \neq j}\abs{1 - e^{i(\lambda_k - \lambda_j) t}}^2 <  16d\vep^2
     \end{equation}
     In order to use the explicit form of the eigenvalue distribution we pass to the angular distance by using $\abs{1 - e^{i\varphi}} \geq \frac{2}{\pi}\abs{\varphi}$, valid for $\varphi \in [-\pi,\pi]$. Assume for notational convenience that $j=d$. Let $z_k := (\lambda_k-\lambda_j)t \mod 2\pi$. The above inequalities then imply that the point $(z_1,\dots,z_{d-1})$ on the $d-1$-dimensional torus $\mathbb{T}^{d-1} = \mathbb{R}^{d-1} / (2\pi \mathbb{Z})^{d-1}$ lies within distance $\frac{\pi}{2}\sqrt{d}\vep$ of $0$, where distance is measured by $(\sum_k\abs{z_k})^{1/2}$.   

     Let $f=f_2 \circ f_1$, where $f_1(\lambda_1,\dots,\lambda_d) = t\cdot(\lambda_1 - \lambda_d,\dots,\lambda_{d-1}-\lambda_d)$ and $f_2(x_1,\dots,x_{d-1}) = (x_1,\dots,x_{d-1}) \mod 2\pi$. Let $g$ denote the density of the probability distribution which is the image of the uniform distribution over $[-1,1]^d$ under the map $f_1$. Likewise, let $h$ denote the density of the image under the composite map $f$. By the above discussion it suffices to bound the probability of the ball $B_{\mathbb{T}^{d-1}}(0,r)$ for $r = \frac{\pi}{2}\sqrt{d}\vep$ under the probability density $h$.
     
      We first claim that the probability density $g$ is pointwise bounded by $(2t)^{-(d-1)}$. To see this, condition on the value $\lambda_d = \lambda$ and compute the density for $t=1$:
     \begin{equation}
         g(z_1,\dots, z_{d-1}) = \frac{1}{2^d}\int_{\R} \id_{[-1,1]}(\lambda) \prod_{i=1}^{d-1}\id_{[-1,1]}(\lambda + z_i) d\lambda
     \end{equation}
     It is easy to see that the support of the integrand is always an interval of length at most $2$ and hence $g \leq 2^{-(d-1)}$ as claimed. The bound $(2t)^{-(d-1)}$ follows by scaling.

     To obtain the periodized density $h$, we need to sum over possible preimages of a point:
     \begin{equation}
         h(\varphi_1,\dots,\varphi_{d-1}) = \sum_{m \in \Z^{d-1}}g(\varphi_1+2\pi m_1,\dots,\varphi_{d-1}+2\pi m_{d-1})
     \end{equation}
     The distribution of each coordinate on the right hand side is supported on the interval $[-2t,2t]$ and therefore a point $m \in \Z^{d-1}$ can give a nonzero contribution only if $\varphi_i + 2\pi m_i \in [-2t,2t]$ for each $i$. The number of possible integers that satisfy such a bound for a single coordinate is at most $1 + \frac{2t}{\pi}$. Hence the overall bound becomes:
     \begin{equation}
         h(z_1,\dots, z_{d-1}) \leq \left(1 + \frac{2t}{\pi}\right)^{d-1}\cdot (2t)^{-(d-1)} = \left(\frac{1}{2t} + \frac{1}{\pi}\right)^{-(d-1)} \leq \left(\frac{1}{2T} + \frac{1}{\pi}\right)^{-(d-1)} < 4^{-(d-1)}
     \end{equation}
     It now suffices to integrate the above density bound over the ball $B_{\mathbb{T}^{d-1}}(0,r)$. The volume of this ball is obviously at most the volume of the same ball in $\R^{d-1}$, since the projection of $\R^{d-1}$ onto $\mathbb{T}^{d-1}$ can only identify points and thus decrease the volume. We thus have:
     \begin{equation}
         \int_{B_{\mathbb{T}^{d-1}}(0,r)}h(z)dz \leq \mathrm{vol}(B_{\R^{d-1}}(0,r)) \cdot 4^{-(d-1)}
     \end{equation}
     It can be easily computed using Stirling approximation that $\vol_{\R^{d-1}}(B(0,r)) \leq \left(\frac{\pi}{2}\vep\right)^{d-1}(\sqrt{2e\pi})^{d-1} < (10\vep)^{d-1}$ and therefore:
     \begin{equation}
         \int_{B_{\mathbb{T}^{d-1}}(0,r)}h(z)dz < (40 \vep)^{-(d-1)}        
     \end{equation}
    Finally, recall that we need to perform a union bound over possible values of the index $j$ from \eqref{eq:indexj}. This gives a factor of $d$, which we can bound by $d \leq 2^{d-1}$ and thus overall obtain:
    \begin{equation}
    \Pr\left( \dtr(\psi_0, \psi_t) < \vep \right) \leq  (80\vep)^{d-1}
    \end{equation}
\end{proof}

\begin{proof}[Proof of \cref{th:lower-bound-random}]
    We will prove that with probability $1 - (3/4)^d - 5\exp(-\frac{d}{400})$ for every $t \in [\texit, T_0]$, for $T_0$ to be chosen later, we have $\dtr(\psi_0,\psi_t) > \vep$. Since by \cref{prop:texit-bound} we have $\texit < A_4\vep = 2\sqrt{5}\vep < T_0$ if $T_0$ is sufficiently large, this  will prove that the time of first recurrence is at least $T_0$.

    By \cref{prop:200} we automatically have $\dtr(\psi_0,\psi_t) > \vep$ for all $t \in [\texit, T]$ with probability at least $1 - 5\exp(-\frac{d}{400})$. This means it suffices to prove the statement for all times $t \in [T, T_0]$.

    Let us divide the interval $[T, T_0]$ into points $t_i, i=1,\dots,n$, where $t_{i+1}-t_{i} = \vep $ and $n<\lceil \frac{T_0}{\vep} \rceil$. We perform a union bound over $t_i$:
    \begin{equation} 
        \Pr(\exists_i \dtr(\psi_0, \psi_{t_i}) < 2\vep ) \leq \sum_{i=1}^{n} \Pr(\dtr(\psi_0, \psi_{t_i}) < 2\vep ).
    \end{equation}
    
Since $t_i >  T$ the bound \eqref{eq:bound2} holds and there are $\leq \frac{T_0}{\vep}$ such points, so:
    \begin{equation}
    \sum_{t_i \in [T_1, T]}\Pr(\dtr(\psi_0, \psi_{t_i}) < 2\vep ) \leq \frac{T_0}{\vep}\cdot  \left( 80\vep\right)^{d-1}
    \end{equation}

Now, for any $t \in [t_{i},t_{i+1}]$ we use the triangle inequality:
    \begin{equation}
        \dtr(\psi_0, \psi_{t}) \geq \dtr(\psi_0, \psi_{t_{i}}) - \dtr(\psi_{t_{i}}, \psi_{t}) 
    \end{equation}
    Noticing that $\dtr(\psi_t, \psi_{t_{i}}) = \dtr(\psi_0, \psi_{t-t_{i}})$ and using \eqref{eq:dtr-hs} it is straightforward to show that $\dtr$ has Lipschitz constant at most $\sqrt{\frac{1}{d}\sum_{k=1}^{d} \lambda_k^2}$, that is:
    \begin{equation}
        \dtr(\psi_{t_{i}}, \psi_{t}) \leq \sqrt{\frac{1}{d}\sum_{k=1}^{d} \lambda_k^2} \cdot \abs{t-t_i}
    \end{equation}
    Since by definition we have $\abs{t-t_i} \leq \vep / 2$ and by \cref{prop:texit-bound} $\sqrt{\frac{1}{d}\sum_{k=1}^{d} \lambda_k^2} = \sqrt{\avg{H^2}} < \sqrt{1/2} < 1$, in the end we have:
    \begin{equation}
    \dtr(\psi_0, \psi_{t}) \geq \dtr(\psi_0, \psi_{t_{i}}) - \dtr(\psi_{t_{i}}, \psi_{t}) >
    2\vep - \vep = \vep
    \end{equation}
    Overall we obtain the following bound:
    \begin{equation}\label{eq:final-bound}
              \Pr(\exists_t \in [\texit, T_0] \dtr(\psi_0, \psi_{t_i}) < \vep ) \leq
        \frac{T_0}{\vep}\cdot\left( 80\vep\right)^{d-1}
    \end{equation}
    Now it suffices to choose $T_0$ such that the right hand side is at most $(3/4)^d$, so it suffices to choose:
    \begin{equation}
   T_0 = \left(  \frac{3}{4}\right)^d \left(\frac{1}{80}\right)^{d-1} \vep^{-(d-2)}   > \left(\frac{1}{300\vep}\right)^{d-2}
    \end{equation}.
\end{proof}

\subsection{Lower bound for states with small effective dimension}

As mentioned in the main text, the scaling of the upper bound on recurrence time can be improved from $\sim(1/\vep)^d$ if the initial state $\ket{\psi_0}$ has effective support smaller than $d$. It is natural to ask what property of the initial state and the Hamiltonian governs the true behavior of the recurrence time. One possible candidate is the effective dimension, also known as the inverse participation ratio, defined as:
\begin{equation}
    d_{\mathrm{eff}} := \frac{1}{\sum_{i=1}^{d}\abs{a_i}^4}
\end{equation}
The effective dimension has been used together with concentration techniques e.g. in \cite{alvaro-recurrence} to give lower bounds on average recurrence times for certain observables. It is reasonable to ask whether this reflects the true behavior of the recurrence time, i.e. if it is possible to prove a matching upper bound depending on $d_{\mathrm{eff}}$ instead of $d$. Below we refute this possibility for random Hamiltonians -- with high probability, we give an example of an initial state which has low effective dimension, yet whose recurrence time  scales exponentially in $d$ instead of $d_{\mathrm{eff}}$.

Let $H$ be as before. Fix $\eta \in [0,1]$ and let:
 \begin{equation}\label{eq:state-eta}
     \ket{\psi_0} = \sum_{k=1}^{d}a_k\ket{k}
 \end{equation}
with $a_1 = \sqrt{\eta}$ and $a_k = \sqrt{\frac{1-\eta}{d-1}}$ for $k=2,\dots,d$, with the ordering of the eigenstates chosen randomly. To get a useful probabilistic bound we will require that $\eta = o(1)$ as $d \to \infty$. For technical simplicity we also require $\eta \gg \frac{1}{d}$.

We easily compute that:
\begin{equation}\label{eq:deff}
                d_{\mathrm{eff}} = \left( \sum_{i=1}^{d}\abs{a_i}^4 \right)^{-1} = \left( \eta^2 + \frac{(1-\eta)^2}{d-1}  \right)^{-1} =
        \frac{d-1}{d\eta^2 + 1 -2\eta} < \frac{1}{\eta^2 - \frac{2\eta}{d}} \sim \frac{1}{\eta^2}
\end{equation}
since by assumption $\eta \gg \frac{1}{d}$.

\begin{theorem}\label{th:lower-bound-random-eta}
    Let $H$ be a random Hamiltonian with eigenvalues $\lambda_k, k = 1,\dots, d$ as above. Let $\ket{\psi_0}$ be given by \eqref{eq:state-eta}.  Assume $\vep <\frac{1}{600}$. Also assume that $\vep > \sqrt{\eta}$. Then with probability at least $1 - (3/4)^d - 5\exp(-\frac{d_{\mathrm{eff}}}{400})$, with $d_{\mathrm{eff}}$ given by \eqref{eq:deff}, the recurrence time for the evolution started at $\ket{\psi_0}$ is at least:
    \begin{equation}
         \trec(\vep) > \left(  \frac{1}{600}\cdot\frac{1}{\vep} \right)^{d-3}
    \end{equation}
\end{theorem}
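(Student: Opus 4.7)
The plan is to mirror the four-step argument used to prove Theorem~S2, treating the non-uniform state as a perturbation of a uniform superposition on $d-1$ eigenvalues and losing a single dimension in the final exponent as a result.

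\emph{Step 1 (exit time).} I would first establish the analogue of Proposition~S1 for the new state. The moments $\langle H^m\rangle=\sum_k|a_k|^2\lambda_k^m$ are weighted sums of i.i.d.\ uniform random variables, so Hoeffding's inequality applied to such a weighted sum gives deviation probability controlled by $\sum_k(|a_k|^2)^2=1/d_{\mathrm{eff}}$, producing the $\exp(-cd_{\mathrm{eff}})$ factor that ultimately appears as $\exp(-d_{\mathrm{eff}}/400)$ in the statement. Since the distinguished mode contributes at most $O(\eta)=o(1)$ to each moment, the numerical bounds on $\langle\Delta H^2\rangle$ and $\langle(H-\langle H\rangle)^4\rangle$ from Proposition~S1 carry over, as do the one-sided exit-time estimates $\texit(\vep)\gtrsim\vep$ and, for $\vep$ below an absolute constant, $\texit(\vep)\lesssim\vep$.

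\emph{Step 2 (short-time monotonicity).} Given Step~1, the analogue of Proposition~S2 follows verbatim, since its proof only uses $F'(t)<0$ on $[0,\texit(\vep_\ast)]$, which is driven by the variance/fourth-moment bounds. This disposes of $t\in[\texit,T]$ for some absolute constant $T$.

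\emph{Step 3 (geometric bound for $t>T$).} This is the crux. Using the pure-state inequality $\dtr(\psi_0,\psi_t)^2\ge\tfrac12\inf_\varphi\|\ket{\psi_0}-e^{i\varphi}\ket{\psi_t}\|_2^2$ and dropping the non-negative $k=1$ summand, I would bound
\begin{equation*}
    \dtr(\psi_0,\psi_t)^2 \;\ge\; 2\inf_{\varphi}\sum_{k=1}^d |a_k|^2\sin^2\!\left(\tfrac{\lambda_k t+\varphi}{2}\right) \;\ge\; \frac{2(1-\eta)}{d-1}\inf_{\varphi}\sum_{k=2}^d\sin^2\!\left(\tfrac{\lambda_k t+\varphi}{2}\right).
\end{equation*}
The right-hand side is precisely the quantity analyzed in Proposition~S3, but with the $d$ i.i.d.\ eigenvalues replaced by the $d-1$ i.i.d.\ eigenvalues $\lambda_2,\dots,\lambda_d$. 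Re-running the lattice-covering/volume-of-a-ball argument in $d-1$ dimensions yields, for every $t>T$,
\begin{equation*}
    \Pr\!\big(\dtr(\psi_0,\psi_t)<\vep\big)\;\le\;\frac{8\pi}{\vep^2}\,(C'\vep)^{\,d-1},
\end{equation*}
where $C'$ absorbs a harmless $(1-\eta)^{-1/2}$ factor.

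\emph{Step 4 (union bound).} Discretize $[T,T_0]$ into $\lceil T_0/\vep\rceil$ grid points, use the $O(1)$ Lipschitz constant of $t\mapsto\dtr(\psi_0,\psi_t)$ inherited from the variance bound to pass from grid points to all times, and choose $T_0$ as large as possible subject to the total probability being below $(3/4)^d$. Tracking constants yields $T_0\gtrsim(1/(20000\vep))^{d-1}$, proving the theorem.

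The main obstacle is Step~3: discarding the $k=1$ sine term is legitimate only because the hypothesis $\vep>\sqrt\eta$ prevents the distinguished mode from single-handedly pushing the distance across the threshold $\vep$, so the remaining $d-1$ generic sines are forced to do so and become amenable to the lattice/volume argument. Everything else is a careful rerun of the proof of Theorem~S2, with the sharpening of $1/10000$ to $1/20000$ being the most tedious non-conceptual part of the bookkeeping.
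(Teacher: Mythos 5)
Your proposal is correct and follows the same four-step architecture as the paper: Hoeffding on the weighted moments giving the $\exp(-d_{\mathrm{eff}}/400)$ term, the short-time monotonicity argument, a reduction of the $t>T$ event to the $(d-1)$-dimensional lattice/volume bound of Proposition~\ref{th:geometric-bound}, and the final discretization and union bound. The only place you diverge is the reduction step: the paper writes $\dtr(\psi_0,\psi_t)^2=1-\abs{\eta e^{i\lambda_1 t}+\tfrac{1-\eta}{d-1}\sum_{k\geq2}e^{i\lambda_k t}}^2$ and uses the triangle inequality to show that $\dtr(\psi_0,\psi_t)<\vep$ forces $\dtr(\phi_0,\phi_t)<2\vep$ for the uniform superposition $\phi$ on the remaining $d-1$ eigenstates (this is exactly where the hypothesis $\vep>\sqrt{\eta}$ enters, via $\sqrt{1-\vep^2}-\eta>\sqrt{1-4\vep^2}$), whereas you lower-bound $\dtr(\psi_0,\psi_t)^2$ by the weighted sine sum and simply drop the nonnegative $k=1$ term before rerunning the geometric argument in $d-1$ dimensions at scale $2\vep$. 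Both routes land on the same probability bound with the same constants, so your variant is fine; note however that your closing justification is slightly off: discarding a nonnegative term in a lower bound needs no hypothesis, so in your route $\vep>\sqrt{\eta}$ is not actually what legitimizes Step~3 --- all you use there is that $1-\eta$ is bounded below (guaranteed by $\eta=o(1)$), while $\vep>\sqrt{\eta}$ is what the paper's triangle-inequality reduction requires.
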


The above theorem provides a family of examples where $d_{\mathrm{eff}} \ll d$, yet the effective support is $\sim d$ and the recurrence time is lower bounded by $\sim(\frac{1}{\vep})^d \gg (\frac{1}{\vep})^{d_{\mathrm{eff}}}$ with probability $1-o(1)$. For concreteness we can take e.g. $\eta = \frac{1}{\sqrt{\log d}}$ to obtain $d_{\mathrm{eff}} \sim \log d \ll d$. Then the above recurrence lower bound holds with probability $1-o(1)$ whenever $\vep > (\log d)^{-1/4}$

\begin{proposition}\label{prop:eff-support-hk}
    With probability at least $1-5\exp(-\frac{d_{\mathrm{eff}}}{400})$ the same bounds as in \cref{prop:texit-bound} hold.
\end{proposition}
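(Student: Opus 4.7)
The plan is to mirror the proof of Proposition~1 line by line, with the single substitution that the uniform-weight concentration step is replaced by weighted Hoeffding. Writing $\avg{H^k} = \sum_{i=1}^{d} |a_i|^2 \lambda_i^k$ expresses each moment as a sum of independent random variables $X_i = |a_i|^2 \lambda_i^k$ satisfying $|X_i| \leq |a_i|^2$, so that $\sum_i (b_i - a_i)^2 \leq 4 \sum_i |a_i|^4 = 4/d_{\mathrm{eff}}$. Plugging into the form of Hoeffding's inequality \eqref{eq:Hoeffding} yields, for any $\delta > 0$,
$$\Pr\bigl(|\avg{H^k} - \E\avg{H^k}| \geq \delta\bigr) \leq 2\exp\!\left(-\delta^2 d_{\mathrm{eff}}/4\right).$$
Crucially, the expected moments are insensitive to the particular weights: since the $\lambda_i$ are i.i.d.\ uniform on $[-1,1]$, one has $\E\avg{H^k} = \bigl(\sum_i |a_i|^2\bigr)\cdot \E\lambda^k = \E\lambda^k$, so $\E\avg{H} = \E\avg{H^3} = 0$, $\E\avg{H^2} = 1/3$, $\E\avg{H^4} = 1/5$, exactly as in the uniform case of Proposition~1.

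Choosing $\delta = 1/10$ and taking a union bound over the four moment estimators (plus the check on $\avg{H}$, for five events in total) then gives, with probability at least $1 - 5\exp(-d_{\mathrm{eff}}/400)$, the simultaneous inequalities $|\avg{H}| < 1/10$, $1/5 < \avg{H^2} < 1/2$, $\avg{H^4} < 1/10$. A routine computation, identical to the one in the proof of Proposition~1, turns these into the bounds $1/10 < \avg{\Delta H^2} < 1/2$ and $\sqrt{\avg{(H-\avg{H})^4}} < 2/5$, hence \eqref{eq:variance32} and \eqref{eq:114} with the same constants $A_1 = 1/2$, $A_2 = 1/9$. The quantitative bounds on $\texit(\vep)$ then follow, as before, by combining the Mandelstam--Tamm lower bound with \cref{th:exit-time-bound-prl}.

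There is no real obstacle here; the only conceptual point is the identification of $d_{\mathrm{eff}}$ as the appropriate Hoeffding scale for weighted empirical moments. It degenerates to $d$ in the uniform case of Proposition~1, and under the assumption $\eta \gg 1/d$ it is dominated by the contribution $\eta^2$ of the single heavy coefficient, which is exactly what limits the concentration. Nothing else in the proof of \cref{th:lower-bound-random-eta} needs adjustment: the subsequent pieces (monotonicity of the distance on $[\texit,T]$ via the analogue of Proposition~2, and the geometric lattice bound of Proposition~3) depend only on the eigenvalue distribution, not on the weights, so they transfer verbatim.
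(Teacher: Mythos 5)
Your proof is correct and takes essentially the same route as the paper: both replace the uniform-weight Hoeffding step by the weighted one with $X_i=\abs{a_i}^2\lambda_i^k$, so the range term becomes $\sum_i(b_i-a_i)^2\leq 4\sum_i\abs{a_i}^4=4/d_{\mathrm{eff}}$ and the exponent $\exp(-d_{\mathrm{eff}}/400)$ appears, while the expected moments are unchanged because $\sum_i\abs{a_i}^2=1$. (Minor aside: your $\E\avg{H^3}=0$ is the correct value for eigenvalues uniform on $[-1,1]$; the $1/4$ quoted in \cref{prop:texit-bound} is a typo, and the remainder of the argument is unaffected.)
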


    \begin{proof}
The computation of $\E\avg{H^k}$ is identical since these expected values with respect to the choice of $H$ do not depend on the state, only on the distribution of eigenvalues. Similarly to \cref{prop:texit-bound} we use Hoeffding's equality to bound the deviations of $\avg{H^k}, k=1,2,3,4$ from their respective expected values. We set $X_i = \abs{a_i}^2\lambda_i^4$ and $\delta=1/10$. Note that this time we have $-\abs{a_i}^2 \leq X_i \leq \abs{a_i}^2$, so the right hand side of \eqref{eq:Hoeffding} becomes:
\begin{equation}
    \Pp\left( S_n - \E S_n \geq \delta \right) \leq  
\exp\left( -\frac{\delta^2}{4\sum_{k=1}^{n}\abs{a_k}^4} \right) = 
\exp\left( -\frac{d_{\mathrm{eff}}}{400}\right) 
\end{equation}
which finishes the proof analogously to \cref{prop:texit-bound}
    \end{proof}

\begin{proposition}\label{prop:200-eta}
    Assume $\vep <  A_2 = \frac{1}{9}$ and let $T = A_2A_3 = \frac{\sqrt{2}}{9}$. Then with probability at least $1-5\exp(-\frac{d_{\mathrm{eff}}}{400})$ we have $\texit < T$ for every $t \in [\texit, T]$ we have $\dtr(\psi_0, \psi_t) > \vep $.
\end{proposition}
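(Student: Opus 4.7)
The plan is to mirror the proof of \cref{prop:200} verbatim, simply swapping the input concentration bound \cref{prop:texit-bound} for its effective-dimension analog \cref{prop:eff-support-hk}. Since the latter establishes exactly the same deterministic inequalities on $\avg{\Delta H^2}$, $\sqrt{\avg{(H-\avg{H})^4}}$, and $\texit(\vep)$ as \cref{prop:texit-bound} does, only with failure probability $5\exp(-d_{\mathrm{eff}}/400)$ instead of $5\exp(-d/400)$, the rest of the argument transfers with no change.

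More concretely, I would condition on the high-probability event guaranteed by \cref{prop:eff-support-hk}. On this event the constants $A_1,A_2,A_3,A_4$ from \cref{prop:texit-bound} control the variance, the ratio defining $\vep_*$, and the exit time exactly as before. In particular $\vep_* > A_2$, so whenever $\vep < A_2$ we have $\vep < \vep_*$. Next I would invoke the inequality \eqref{eq:derivative} from the proof of \cref{th:exit-time-bound-prl}: for all $t \in [0, \texit(\vep_*)]$ the derivative $F'(t)$ is strictly negative, so $F(t)$ is strictly decreasing on this interval, equivalently $\dtr(\psi_0, \psi_t)$ is strictly increasing. Since $\dtr(\psi_0, \psi_{\texit(\vep)}) = \vep$ by definition, monotonicity forces $\dtr(\psi_0, \psi_t) > \vep$ for every $t \in (\texit(\vep), \texit(\vep_*)]$.

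Finally, to see that this interval contains $[\texit(\vep), T]$, I would use the lower bound $\texit(\vep_*) > A_3 \vep_* > A_3 A_2 = T$, where the first inequality is the quantum-speed-limit lower bound from \cref{prop:eff-support-hk} applied at scale $\vep_*$, and the second uses $\vep_* > A_2$. Combining these gives $\texit(\vep) < \texit(\vep_*)$, with $\texit(\vep) < T$ following from the stronger upper bound $\texit(\vep) \le A_4 \vep < A_4 A_2 < T$ (or from $\texit(\vep) \le \texit(\vep_*)$ followed by the monotonicity conclusion on $[\texit(\vep), T] \subset [\texit(\vep), \texit(\vep_*)]$).

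The only potential obstacle is verifying that the proof of \cref{prop:200} really used nothing about the initial state beyond the deterministic inequalities packaged by \cref{prop:texit-bound}; a quick check confirms this, because the argument operates entirely at the level of $\avg{H^k}$ moments and the general inverse-speed-limit \cref{th:exit-time-bound-prl}, both of which are state-agnostic inputs. Thus the proof of \cref{prop:200-eta} reduces to a one-line observation: replace every invocation of \cref{prop:texit-bound} in the proof of \cref{prop:200} with \cref{prop:eff-support-hk}, and the identical chain of inequalities yields the desired conclusion with the new failure probability.
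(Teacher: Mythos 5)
Your proposal is correct and is exactly the paper's own proof: the paper proves \cref{prop:200-eta} by repeating the argument of \cref{prop:200} verbatim, replacing the invocation of \cref{prop:texit-bound} with \cref{prop:eff-support-hk}, so that only the failure probability changes to $5\exp(-d_{\mathrm{eff}}/400)$. Your spelled-out chain (conditioning on the concentration event, $\vep < A_2 < \vep_\ast$, monotonicity of the distance via \eqref{eq:derivative}, and $\texit(\vep_\ast) > A_3\vep_\ast > A_2A_3 = T$) matches the paper's reasoning step for step.
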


\begin{proof}
    Identical to the proof of \cref{prop:200}, only using \cref{prop:eff-support-hk} instead of \cref{prop:texit-bound}. The only difference between these two Propositions is the success probability $1-5\exp(-\frac{d_{\mathrm{eff}}}{400})$ instead of $1-5\exp(-\frac{d}{400})$.
\end{proof}

\begin{proposition}\label{prop:eta-bound}
Suppose that $\vep > \sqrt{\eta}$. Then for any fixed $t > T=\frac{\sqrt{2}}{9}$ we have:
    \begin{equation}
        \Pr\left( \dtr(\psi_0, \psi_t) < \vep \right) \leq  (160\vep)^{d-2}
    \end{equation}
\end{proposition}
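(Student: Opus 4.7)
My plan is to follow the proof of \cref{th:geometric-bound} almost verbatim, adapting it to the non-uniform coefficient vector in \eqref{eq:state-eta}. Writing $a_1=\sqrt\eta$ and $a_k=\sqrt{(1-\eta)/(d-1)}$ for $k\geq 2$, the projectivized $\ell^2$-bound \eqref{eq:dtr-hs} gives
\begin{equation}
    \dtr(\psi_0,\psi_t)^2 \;\geq\; \inf_\varphi\Bigl(2\eta\sin^2\!\bigl(\tfrac{\lambda_1 t+\varphi}{2}\bigr) + \tfrac{2(1-\eta)}{d-1}\sum_{k=2}^{d}\sin^2\!\bigl(\tfrac{\lambda_k t+\varphi}{2}\bigr)\Bigr).
\end{equation}
The first summand is nonnegative, so I drop it and work with the sum over $k\geq 2$. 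This is exactly the object analyzed in \cref{th:geometric-bound}, modulo two modifications: (i) the sum runs over $d-1$ indices rather than $d$, and (ii) the prefactor is $2(1-\eta)/(d-1)$ in place of $2/d$. The hypothesis $\vep>\sqrt\eta$, together with the smallness of $\vep$, forces $\eta<\tfrac{1}{2}$, hence $1-\eta\geq\tfrac{1}{2}$; this is the only use of $\vep>\sqrt\eta$, and it is precisely what converts the constant $50$ of \cref{th:geometric-bound} into $100$ here.

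The lattice-approximation step applies coordinate-wise to yield
\begin{equation}
    \sum_{k=2}^{d} 2\sin^2\!\bigl(\tfrac{\lambda_k t+\varphi}{2}\bigr) \;\geq\; \tfrac{2}{\pi^2}\norm{(\lambda t+\bar\varphi-x)_{2:d}}^2
\end{equation}
for the nearest $x\in(2\pi\Z)^d$, where only the last $d-1$ coordinates of $\lambda t$ appear. Combining this with the assumption $\dtr(\psi_0,\psi_t)<\vep$ (with a factor-of-$2$ slack to accommodate the subsequent $\varphi$-union bound) reduces the event of interest to the existence of some lattice point $z\in\Lambda_\varphi$ with $\norm{(\lambda t-z)_{2:d}} < c_0\vep\sqrt{d-1}$, where $c_0$ can be taken to be $8$ once the bound $1-\eta\geq\tfrac{1}{2}$ is used.

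For fixed $\varphi$ the probability of this event is bounded exactly as in \cref{th:geometric-bound}, but in $\R^{d-1}$ rather than $\R^d$: the marginal distribution of $(\lambda_2,\dots,\lambda_d)\,t$ is uniform on $[-t,t]^{d-1}$, so each lattice point contributes probability at most $(c_0\vep/(2t))^{d-1}\vol(B(0,\sqrt{d-1}))$, while the number of relevant lattice points is at most $\vol(B(0,(t+\pi+c_0\vep)\sqrt{d-1}))$. Multiplying these bounds and using $\vol(B(0,\sqrt{d-1}))\leq 5^{d-1}$, $t>T$, and $c_0\vep+\pi\leq 8$ (valid for $\vep$ small enough) yields a per-$\varphi$ probability at most $(C\vep)^{d-1}$ with $C=100(1+8/T)$. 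Finally, the Lipschitz union bound \eqref{eq:sin-lipschitz} over an $O(\vep^{-2})$-net in $\varphi\in[0,2\pi)$ produces the $\tfrac{2\pi}{\vep^2}$ prefactor---the slightly smaller discretization constant compared with \cref{th:geometric-bound} is permitted by the slack in $(1-\eta)^{-1}\geq 1$. The only real obstacle is bookkeeping: tracking the $(1-\eta)$-dependent constants through the lattice-counting calculation so that the final base is indeed $\leq 100(1+8/T)$; no new probabilistic input beyond restricting the calculation to $d-1$ coordinates is needed.
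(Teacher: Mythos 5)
Your argument is correct and lands on exactly the same constants, but it is packaged differently from the paper. The paper never re-opens the lattice computation: it works at the level of the overlap, noting that $\dtr(\psi_0,\psi_t)<\vep$ forces $\abs{\tfrac{1}{d-1}\sum_{k\geq 2}e^{i\lambda_k t}}>\sqrt{1-\vep^2}-\eta>\sqrt{1-4\vep^2}$ (here $\vep>\sqrt{\eta}$ gives $\eta<\vep^2$), i.e.\ $\dtr(\phi_0,\phi_t)<2\vep$ for the \emph{uniform} superposition $\phi$ on the remaining $d-1$ eigenstates, and then invokes \cref{th:geometric-bound} as a black box with $d\to d-1$ and $\vep\to 2\vep$, giving $\tfrac{8\pi}{(2\vep)^2}\bigl(2\cdot 50(1+\tfrac{8}{T})\vep\bigr)^{d-1}$, which is precisely the stated bound. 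Your white-box route --- dropping the nonnegative $k=1$ sine term, carrying the weight $(1-\eta)/(d-1)$ through the lattice argument with $1-\eta\geq \tfrac12$, threshold $c_0=8$, and an $O(\vep^{-2})$ net in $\varphi$ --- is equally valid, and the bookkeeping checks out (at net points $\tfrac{2}{\pi^2}\cdot 64\approx 13$ minus a Lipschitz loss of at most $4$ comfortably exceeds the required $2$, and $25\cdot 4=100$ with $\pi+8\vep\leq 8$ reproduces $C=100(1+\tfrac{8}{T})$); it uses the hypothesis $\vep>\sqrt{\eta}$ only through $\eta\leq\tfrac12$, so it is marginally more robust, at the cost of re-verifying the volume estimates in dimension $d-1$, which the paper's two-line triangle-inequality reduction avoids.
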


\begin{proof}
Suppose that $\dtr(\psi_0,\psi_t) < \vep$ for some $t$. Since:
\begin{equation}
    \vep^2 > \dtr(\psi_0,\psi_t)^2 = 1 - \abs{\sum_{k=1}^{d}\abs{a_k}^2 e^{i\lambda_k t} }^2 = 
    1 - \abs{\eta e^{i\lambda_1 t} + \frac{1-\eta}{d-1}\sum_{k=2}^{d} e^{i\lambda_k t} }^2
\end{equation}
so that:
\begin{equation}
    \abs{\eta e^{i\lambda_1 t} + \frac{1-\eta}{d-1}\sum_{k=2}^{d} e^{i\lambda_k t} } > \sqrt{1-\vep^2} 
\end{equation}
By the triangle inequality this easily implies:
\begin{equation}\label{eq:d-1_sum}
    \abs{\frac{1}{d-1}\sum_{k=2}^{d} e^{i\lambda_k t} } > \frac{\sqrt{1-\vep^2}-\eta}{1-\eta} > \sqrt{1-\vep^2}-\eta 
\end{equation}

Since we assume $\vep > \sqrt{\eta}$, this implies $\eta < \vep^2$, which in turn implies that $\sqrt{1-\vep^2}-\eta > \sqrt{1-4\vep^2}$. Thus the event \eqref{eq:d-1_sum} implies the event:
\begin{equation}
     \dtr(\phi_0,\phi_t) < 2\vep
\end{equation}
where $\ket{\phi_t} = \frac{1}{\sqrt{d-1}}\sum_{k=2}^{d}e^{i\lambda_k t}\ket{k}$. We are now in position to use \cref{th:geometric-bound}, only with state $\phi_t$ of dimension $d-1$ instead of $d$ and setting $\vep\to 2\vep$, and the claimed bound follows.
\end{proof}

\begin{proof}[Proof of \cref{th:lower-bound-random-eta}]
    Follows from repeating the proof of \cref{th:lower-bound-random} verbatim, only using \cref{prop:200-eta} and \cref{prop:eta-bound} when appropriate instead of \cref{prop:200} and \cref{th:geometric-bound}.
\end{proof}

\section{Miscellaneous remarks on exit time for states}\label{sec:states}

There seems to be no simple criterion for finiteness of $\texit$ in full generality. However, the following proposition gives a sufficient and necessary condition under mild assumptions on $H$. Namely, we require that the eigenvalues of $H$ are rationally independent, i.e. if:
\begin{equation}
    \sum_{k=1}^{d} q_k \lambda_k = 0
\end{equation}
for $q_k \in \mathbb{Q}$, then $q_k=0$ for all $k$. This assumption will be satisfied by a generic Hamiltonian, i.e. any Hamiltonian $H$ may be perturbed by an arbitrarily small amount to a Hamiltonian $H'$ whose eigenvalues are rationally independent.


\begin{proposition}
    Suppose that all eigenvalues of $H$ are rationally independent. Then $\texit$ is finite if and only if:
    \begin{equation}
         1 - \vep^2 \geq \max\{2 \max_k\abs{a_k}^2-1, 0\}
    \end{equation}
\end{proposition}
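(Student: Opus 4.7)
My approach reduces finiteness of $\texit$ to a deterministic optimization on the torus $T^d$, which I solve using a classical minimum-resultant argument.

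First, for pure states $\dtr(\psi_0,\psi_t)^2 = 1 - F(t)$ with $F(t) = |g(t)|^2$ and $g(t) := \sum_{k=1}^{d} p_k\, e^{-i\lambda_k t}$, where $p_k := |a_k|^2$. Finiteness of $\texit$ is therefore equivalent to the existence of $t > 0$ with $|g(t)|^2 \le 1-\vep^2$. The rational independence of $\lambda_1,\dots,\lambda_d$ triggers Kronecker's density theorem: the trajectory $t \mapsto (\lambda_1 t,\dots,\lambda_d t) \bmod 2\pi$ is dense in the torus $T^d$. Combined with continuity of the map $\theta \mapsto |g(\theta)|^2 := \bigl|\sum_k p_k\, e^{i\theta_k}\bigr|^2$ on the compact torus, the range of $F(t)$ over $t > 0$ is dense in the closed interval $[\min_{T^d} |g|^2,\,1]$. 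Hence $\texit$ is finite precisely when $\min_{T^d} |g|^2$ lies below $1 - \vep^2$.

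Next I would compute this minimum explicitly. Put $p^* := \max_k p_k$ and let $k^*$ be an index attaining it. By the triangle inequality, $|g(\theta)| \ge p^* - \sum_{j \neq k^*} p_j = 2p^* - 1$, so combined with $|g|\ge 0$ the general lower bound is $|g(\theta)| \ge \max\{2p^* - 1,\,0\}$. For the matching upper bound I split into two cases: if $p^* \le 1/2$, the weights satisfy the polygon inequality $p_k \le \sum_{j \neq k} p_j$ for every $k$, so the vectors $p_k e^{i\theta_k}$ can be assembled into a closed polygon realizing $g(\theta) = 0$; if $p^* > 1/2$, placing the dominant weight at phase $\pi$ and the rest at phase $0$ yields $|g| = 2p^* - 1$. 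Therefore $\min_{T^d} |g| = \max\{2p^* - 1,\,0\}$, and substitution into the density characterization produces the stated biconditional (possibly up to squaring the right-hand side, since the natural form of the condition reads $\sqrt{1-\vep^2} \ge \max\{2p^*-1,0\}$).

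The main obstacle is the boundary case, where $1-\vep^2$ coincides exactly with $\min_{T^d}|g|^2$. Rational independence typically prevents the orbit from ever hitting the minimizer $\theta^*$ of $|g|^2$ exactly, since this would demand that a non-generic point of $T^d$ lie on the one-parameter orbit; however, deciding whether the inequality at the boundary is strict or not requires a careful argument using the structure of the minimizing configurations (phases in $\{0,\pi\}$) and their avoidance by the dense orbit. In the interior of the region the equivalence is unambiguous from the density argument above combined with the geometric minimum-resultant identity.
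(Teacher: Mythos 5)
Your argument is essentially the paper's own proof: the same reduction of the trace distance to $\bigl|\sum_k |a_k|^2 e^{i\lambda_k t}\bigr|^2$, the same use of Kronecker's theorem to replace the orbit infimum by the infimum over the full torus, and the same evaluation of that infimum via the triangle inequality (for $p^\ast>1/2$) and the closed-polygon construction (for $p^\ast\le 1/2$). The two caveats you flag are real but are imprecisions of the paper's statement and proof rather than gaps in your approach: the paper's $M$ is a squared modulus, so the clean form of the criterion is indeed $\sqrt{1-\vep^2}\ge\max\{2p^\ast-1,0\}$ as you note, and the paper likewise passes silently from ``$\exists\, t$ with $|g(t)|^2\le 1-\vep^2$'' to ``$\inf_t |g(t)|^2\le 1-\vep^2$'' without addressing attainment at the boundary, so your version is, if anything, the more careful one.
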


\begin{proof}
    We have:
    \begin{equation}
        \ket{\psi_t} = \sum_{k=1}^{d}a_k e^{i\lambda_k t}\ket{k}
    \end{equation}
    so that:
    \begin{equation}\label{eq:dtr-exit}
        \dtr(\psi_0, \psi_t)^2 = 1 - \abs{\ketscalar{\psi_0}{\psi_t}}^2 = 1 - \abs{ \sum_{k=1}^{d}\abs{a_k}^2 e^{i\lambda_k t}}^2
    \end{equation}
    The exit time $\texit$ is finite if for some $t$ we have $\dtr(\psi_0, \psi_t) \geq \vep$, which by \eqref{eq:dtr-exit} is equivalent to:
    \begin{equation}
        \abs{ \sum_{k=1}^{d}\abs{a_k}^2 e^{i\lambda_k t}}^2 \leq 1 - \vep^2
    \end{equation}
    Let us consider:
    \begin{equation}\label{eq:infimum}
        M(H,\psi_0) := \inf_{t \geq 0} \abs{ \sum_{k=1}^{d}\abs{a_k}^2 e^{i\lambda_k t}}^2
    \end{equation}
    By assumption, the eigenvalues of $H$ are rationally independent, which implies that the trajectory $\{(e^{i\lambda_1 t},\dots,e^{i\lambda_d t})\}_{t \geq 0}$ is dense on the $d$-dimensional torus (this is known as Kronecker's theorem). By continuity, this implies that \eqref{eq:infimum} can be equivalently written as:
    \begin{equation}\label{eq:infimum-torus}
        M(H,\psi_0) := \inf_{\varphi_1, \dots, \varphi_d \in [0,2\pi)} \abs{ \sum_{k=1}^{d}\abs{a_k}^2 e^{i\varphi_k}}^2
    \end{equation}
    For convenience we write $p_k = \abs{a_k}^2$. Let us sort the coefficients so that $p_1$ is the largest. We consider two cases.

    The first case is $p_1 \leq \frac{1}{2}$. In that case the infimum \eqref{eq:infimum-torus} is actually equal to zero. To see this, we need to find $\varphi_k$ so that the complex numbers $z_k = p_k e^{i\varphi_k}$ sum to zero, i.e. comprise sides of a polygon in the complex plane. It is easy to prove that is is always possible by induction on $d$ (the base case $d=3$ corresponds to creating a triangle with side lengths $p_1, p_2, p_3$, which is possible since all $p_k$ are at most $1/2$.).

    The other case is $p_1 > \frac{1}{2}$. We claim that the infimum is attained when $e^{i\varphi_1} =1 $ and $e^{i\varphi_k} = -1$ for $k=2,\dots,d$. Clearly for such a choice of phases we obtain the value $(2p_1 -1)^2$, which is $>0$ since we assumed $p_1 > \frac{1}{2}$. To see that this is optimal, we use the triangle inequality:
    \begin{equation}
        \abs{ \sum_{k=1}^{d} p_k e^{i\varphi_k} } \geq \abs{p_1 e^{i\varphi_1}} - \abs{\sum_{k=2}^{d} p_k e^{i\varphi_k} } \geq \abs{p_1 e^{i\varphi_1}} - \sum_{k=2}^{d}\abs{ p_k e^{i\varphi_k} } = p_1 - \sum_{k=2}^{d}p_k = 2p_1 - 1
    \end{equation}

    Combining the two cases together, we obtain the condition:
    \begin{equation}
         1 - \vep^2 \geq M(H,\psi_0) = \max\{2 \abs{a_1}^2-1, 0\}
    \end{equation}
    which finishes the proof.
\end{proof}

\end{document}